\documentclass[11pt,letterpaper,english]{article}

\usepackage{amsmath,amssymb}
\usepackage[ruled,vlined]{algorithm2e}
\usepackage{pgf, tikz}
\usetikzlibrary{arrows}
\usepackage{xspace, units}
\allowdisplaybreaks

\usepackage{lmodern}
\usepackage[T1]{fontenc}
\usepackage{textcomp}

\usepackage{nicefrac}

% Die folgenden Befehle müssen auskommentiert werden, wenn auf acm_proc_article-sp gewechselt wird.
\usepackage{amsthm}
\usepackage{typearea}
\typearea{16}

\usepackage{color}
\definecolor{Darkblue}{rgb}{0,0,0.4}
\definecolor{Brown}{cmyk}{0,0.81,1.,0.60}
\definecolor{Purple}{cmyk}{0.45,0.86,0,0}
\usepackage[breaklinks]{hyperref}
\hypersetup{colorlinks=true,
%pdfborder= 1 1 [c3]},
            citebordercolor={.6 .6 .6},linkbordercolor={.6 .6 .6},
            citecolor=black,urlcolor=Darkblue,linkcolor=black,
            pdfpagemode=UseThumbs,pdfstartview=FitH}
\newcommand{\lref}[2][]{\hyperref[#2]{#1~\ref*{#2}}}

\newtheorem{theorem}{Theorem}

\newtheorem{lemma}[theorem]{Lemma}

\newtheorem{claim*}[theorem]{Claim}
\newtheorem{observation}[theorem]{Observation}

%\spnewtheorem{fact}[theorem]{Fact}{\bfseries}{\itshape}
%\spnewtheorem{claim*}[theorem]{Claim}{\bfseries}{\itshape}
%\spnewtheorem{observation}[theorem]{Observation}{\bfseries}{\itshape}

\newcommand{\e}{{\mathrm e}}

\renewcommand{\Pr}[1]{\mbox{\rm\bf Pr}\left[#1\right]}
\newcommand{\Ex}[1]{\mbox{\rm\bf E}\left[#1\right]}

\usepackage{todonotes}
\usepackage{authblk}

\newcommand{\OPT}{\mathrm{OPT}}
\newcommand{\growingmid}{\mathrel{}\middle|\mathrel{}}

\DeclareMathOperator*{\ALG}{ALG}

%\newcommand{\negativeskip}{\vspace{-.7cm}}
%\SetAlgoSkip{negativeskip}

%\title{Online Packing Linear Programs in the Random-Order Input Model}
%\title{Online Packing Linear Programs: Simple and Efficient}
\title{Primal Beats Dual on Online Packing LPs in the Random-Order Model}
\date{}
%\numberofauthors{4}
%\author{
%\alignauthor
%Thomas Kesselheim\\
%	\affaddr{Department of Computer Science, Cornell University, Ithaca, NY, USA.}\\
%	\thanks{Supported by a fellowship within the Postdoc-Programme of the German Academic Exchange Service (DAAD).}\\
%	\email{kesselheim@cs.cornell.edu}
%\alignauthor
%Klaus Radke\\
%	\affaddr{Department of Computer Science, RWTH Aachen University, Germany.}\\
%	\thanks{Supported by the Studienstiftung des deutschen Volkes.}\\
%	\email{radke@cs.rwth-aachen.de}
%\alignauthor
%Andreas T\"onnis\\
%	\affaddr{Department of Computer Science, RWTH Aachen University, Germany.}\\
%	\thanks{Supported by the DFG GRK/1298 ``AlgoSyn''.}\\
%	\email{toennis@cs.rwth-aachen.de}
%\and
%\alignauthor
%Berthold V\"ocking\\
%	\affaddr{Department of Computer Science, RWTH Aachen University, Germany.}\\
%	\email{voecking@cs.rwth-aachen.de}
%}

\author[1]{Thomas Kesselheim\thanks{Supported by a fellowship within the Postdoc-Programme of the German Academic Exchange Service (DAAD).}}
\author[2]{Klaus Radke\thanks{Supported by the Studienstiftung des deutschen Volkes.}}
\author[2]{Andreas T\"onnis\thanks{Supported by the DFG GRK/1298 ``AlgoSyn''.}}
\author[2]{Berthold V\"ocking}
\affil[1]{Department of Computer Science, Cornell University, Ithaca, NY, USA.\authorcr kesselheim@cs.cornell.edu}
\affil[2]{Department of Computer Science, RWTH Aachen University, Germany.\authorcr \{radke, toennis, voecking\}@cs.rwth-aachen.de}

\pagestyle{plain} %for adding page numbers
\begin{document}

\maketitle

\begin{abstract}
% !TeX root = onlinelp.tex

We study packing LPs in an online model where the columns are presented to the algorithm in random order. This natural problem was investigated in various recent studies motivated, e.g., by online ad allocations and yield management where rows correspond to resources and columns to requests specifying demands for resources. Our main contribution is a $1-O(\sqrt{\nicefrac{(\log{d})}{B}})$-competitive online algorithm, where $d$ denotes the {\em column sparsity}, i.e., the maximum number of resources that occur in a single column, and $B$ denotes the {\em capacity ratio $B$}, i.e., the ratio between the capacity of a resource and the maximum demand for this resource. In other words, we achieve a $(1 - \epsilon)$-approximation if the capacity ratio satisfies $B=\Omega(\frac{\log d}{\epsilon^2})$, which is known to be best-possible for any (randomized) online algorithms.

Our result improves exponentially on previous work with respect to the capacity ratio. In contrast to existing results on packing LP problems, our algorithm does not use dual prices to guide the allocation of resources over time. Instead, the algorithm simply solves, for each request, a scaled version of the partially known primal program and randomly rounds the obtained fractional solution to obtain an integral allocation for this request. We show that this simple algorithmic technique is not restricted to packing LPs with large capacity ratio of order $\Omega(\log d)$, but it also yields close-to-optimal competitive ratios if the capacity ratio is bounded by a constant. In particular, we prove an upper bound on the competitive ratio of $\Omega(d^{\nicefrac{-1}{(B-1)}})$, for any $B \ge 2$. In addition, we show that our approach can be combined with VCG payments and obtain an incentive compatible $(1-\epsilon)$-competitive mechanism for packing LPs with $B=\Omega(\frac{\log m}{\epsilon^2})$, where $m$ is the number of constraints. Finally, we apply our technique to the generalized assignment problem for which we obtain the first online algorithm with competitive ratio $O(1)$.

\end{abstract}

\thispagestyle{empty}
\setcounter{page}0
\clearpage

\section{Introduction}
% !TeX root = onlinelp.tex

Many optimization problems appear in the form of online problems, where the input is presented in a serial fashion. This means, requests come one at a time and have to be served directly and without knowledge about future requests. Most problems share the property that there is a set of resources with limited capacity. Any request, if served, raises a profit but it also consumes a certain amount of the resources. Whenever a request arrives, one has to make an irrevocable decision whether to serve it or not. The goal is to maximize the total profit of the served requests without exceeding the resource capacities.

These problems can be modeled by online packing linear programs, i.\,e.\ LPs with non-negative entries where the right-hand side is initially known while the variables appear online. Whenever a request arrives, one or possibly a set of columns of the constraint matrix are revealed along with their corresponding entries in the objective function. The algorithm has to choose at most one the current columns to allocate while not violating the packing constraints.

We study online packing LPs in the random-order model. Here, an adversary may generate an arbitrarily bad instance but he does not choose the order in which the requests arrive. Instead, the arrival order is chosen uniformly at random out of all possible permutations. This setting has been considered in various recent studies, most of which are motivated by online ad allocation. The best known online algorithms for this problem are based on the primal-dual method. They are $(1-\epsilon)$-competitive given that the  \emph{capacity ratio} $B$, i.\,e.\ the ratio between the capacity of a resource and the maximum demand for this resource, is large enough. For instances with $m$ resources and $n$ requests they require $B$ to be lower bounded by $\Omega(\frac{m}{\epsilon^2} \log \frac{n}{\epsilon})$ or $\Omega(\frac{m^2}{\epsilon^2} \log \frac{m}{\epsilon})$\footnote{We follow the convention of using $\log k$ to denote $\log k = \max\{\log_2 k, 1\}$ in asymptotic statements.}, respectively (see Section~\ref{sec:related} for details). These bounds are complemented by an instance with $B=\frac{\log m}{\epsilon^2}$ for which no online algorithm can achieve a $1-o(\epsilon)$-approximation, which leaves an exponential gap between the necessary and sufficient conditions on the capacity ratio for $(1-\epsilon)$-competitive algorithms.

In this paper, we present a $(1-\epsilon)$-competitive online algorithm for instances with capacity ratio $\Omega(\frac{\log d}{\epsilon^2})$, where $d \le m$ denotes the {\em column sparsity}, i.e., the maximum number of resources that occur in a single column. That is, we do not only close the exponential gap with respect to the capacity ratio for $(1-\epsilon)$-competitive algorithms in terms of $m$ but, additionally, achieve the first results on the capacity ratio in terms of $d$. We would like to point out that, apart from the capacity vector and the number of requests $n$, our algorithm does not need to be provided with any prior knowledge about the instance like, e.g., upper or lower bounds on the demands, the capacity ratio, or the coefficients in the objective function. In particular, given any instance with capacity ratio $B$ and sparsity $d$, the competitive ratio of the algorithm is $1-O(\sqrt{\nicefrac{(\log d)}{B}})$ without that the algorithm needs to be tuned with respect to these parameters.

\subsection{Model and Definitions}
\label{sec:definition}
% !TeX root = onlinelp.tex

% We consider packing LPs of the form $\max c^{T} x$ s.t.\ $Ax \leq b$ and $0\leq x\leq 1$. The columns of the linear program appear online in random order and the corresponding entry of the $x$-vector has to be set irrevocably before further columns turn up. This models problems with $m$ resources and $n$ online requests. Each resource $i \in [m]$ has a capacity $b_i$, which is initially known. Additionally, we assume that the number of requests $n$ is known. In the basic scenario, every online request corresponds to one column of the linear program. That is, if request $j \in [n]$ is fully satisfied, it generates a profit of $c_j \geq 0$ and consumes $a_{i, j} \geq 0$ of resource $i$. The fractional allocations for each request $j$ have to satisfy $0 \leq x_j \leq 1$.
% \todo[inline]{basic scenario not used anymore}
% 
% More generally, there might be multiple ways to fulfill a request -- think e.\,g.\ of a routing request -- where each option has individual profits and resource consumptions. In this case, request $j$ corresponds to variables $x_{j, 1}, \ldots, x_{j, K}$. For each option $k$, the request comes with profit $c_{j,k}$ and resource consumption $a_{i, j, k}$ for every resource $i$. As only a single option can be selected, we additionally have the constraints $\sum_{k \in [K]} x_{j, k} \leq 1$.

We consider packing LPs of the form $\max c^{T} x$ s.t.\ $Ax \leq b$ and $0\leq x\leq 1$ to model problems with $m$ resources and $n$ online requests coming in random order.
Each resource $i \in [m]$ has a capacity $b_i \geq 0$, which is initially known.
Additionally, we assume that the number of requests $n$ is known.
Every online request comes with a set of options, where each option has individual profit and resource consumptions.
That is, request $j \in [n]$ corresponds to variables $x_{j, 1}, \ldots, x_{j, K}$ and option $k \in [K]$ raises profit $c_{j,k} \geq 0$ while having resource consumption $a_{i, j, k} \geq 0$ for every resource $i$.
As only a single option can be selected, we additionally have the constraints $\sum_{k \in [K]} x_{j, k} \leq 1$.
The objective is to maximize the total profit without exceeding the resource capacities.
This can be written as the following linear program:
\begin{align*} \label{linearprogram:packing}
 \max          \quad \sum_{j \in [n]} \sum_{k \in [K]} c_{j, k} x_{j, k}&  \\
 \mbox{s.\,t.} \quad \sum_{j \in [n]} \sum_{k \in [K]} a_{i, j, k} x_{j, k}&\leq b_i & i\in [m] \nonumber \\
                                      \sum_{k \in [K]} x_{j, k}&\leq 1          & j\in [n] \nonumber                            
\end{align*}
Requests come in random order and in an online step multiple columns may appear simultaneously.

Our algorithms compute integral solutions to this LP, whose values will be denoted by $\ALG$. We compare these solutions to the fractional optimum, which we refer to by $\OPT$. We express the competitive ratio $\Ex{\ALG} / \OPT$ in terms of the capacity ratio $B = \min_{i \in [m]} \frac{b_i}{\max_{j \in [n], k \in [K]} a_{i,j,k}}$. 
%Note that, in contrast to related work, our main algorithm does not require to know $B$ upfront. 
Furthermore, we assume that every column in the constraint matrix $A$ has at most $d$ non-zero entries. By definition $d \leq m$.

Given a scaling factor $f > 0$ and a set of requests $S \subseteq [n]$, we will denote by $\mathcal{P}(f, S)$ the set of feasible solutions to the LP in which all $b_i$ values are scaled by $f$ and only requests from $S$ are served. Formally this is the set of all vectors $x$ such that $(A x)_i \leq f b_i$ for all $i \in [m]$, $0 \leq \sum_{k \in [K]} x_{j, k} \leq 1$ for all $j \in S$, and $x_{j, k} = 0$ for all $j \not\in S$.

\subsection{Our Results}
% !TeX root = onlinelp.tex

Our main contribution is a natural and robust algorithm for online packing LPs in the random-order model that is $1-O\big(\sqrt{\nicefrac{(\log{d})}{B}}\big)$-competitive. For the general case $d=m$ this matches the known lower-bound by Agrawal et al.~\cite{DBLP:journals/corr/abs-0911-2974}. Also for the case $d=1$ we match the lower bound by Kleinberg~\cite{DBLP:conf/soda/Kleinberg05}.

In each step the algorithm solves a scaled version of the revealed linear program and randomly rounds the solution to obtain an allocation.
In particular, when $\ell$ requests have been revealed we compute an optimal fractional solution of the linear program consisting of all visible columns where we set the capacity vector to $\nicefrac{\ell}{n} \cdot b$.
Then we interpret the fractional allocation of the current request as a probability distribution over its columns and randomly round it.
If the selected tentative option, together with previously allocated columns, does not violate the packing constraints then we allocate it permanently.

Compared to existing algorithms based on the primal-dual paradigm, an interesting advantage of this algorithm is that it does not require any kind of sampling phase in which no allocations are made. Hence, online requests do not suffer from the usual disadvantages of being at the beginning of the online sequence. Another advantage is that we only require very little information. While previous algorithms needed to know $B$ and $d$ upfront, ours flexibly adapts. We demonstrate this by analyzing the same algorithm in the case of small capacity ratios, i.\,e.\ for any $B \geq 2$, and show that it is $O\big(d^{-\nicefrac{2}{(B-1)}}\big)$-competitive. If, however, we know $B$ and $d$ in advance, we can improve this to $O\big(d^{-\nicefrac{1}{(B-1)}}\big)$.

A common motivation for online packing problems are online combinatorial auctions, in which bidders arrive online, report their valuations, and have to be served immediately. We show that with a slight modification our algorithm can be made truthful. That is, bidders do not have an incentive to misreport their valuation. We achieve competitive ratios $1-O\big(\sqrt{\nicefrac{(\log{m})}{B}}\big)$, $O\big(m^{-\nicefrac{2}{(B-1)}}\big)$, and $O\big(m^{-\nicefrac{1}{(B-1)}}\big)$ respectively.

We furthermore consider the online generalized assignment problem in the random-order model, which is the special case of online packing linear programs with $d=1$. For this problem we present a $\nicefrac{1}{8.1}$-competitive algorithm that does not need any assumptions on the parameters of the input instance. Although the analysis serves mainly for the purpose of illustrating our general proof technique without using any technical Chernoff bounds, it is the first result that covers weighted bipartite matching, AdWords and knapsack simultaneously.

\subsection{Related Work}
\label{sec:related}
% !TeX root = onlinelp.tex

The work on online packing linear programs was initiated by Buchbinder and Naor~\cite{DBLP:journals/mor/BuchbinderN09} who analyzed the worst-case model. They assumed that the capacity vector is initially known while the columns, i.\,e.\ the variables, are presented one at a time by an adversary. They presented an optimal primal-dual based algorithm that obtains a competitive ratio of $O\big(1 / (\log m + \log \max_{i \in [m]}\frac{a_i(\max)}{a_i(\min)})\big)$. Here, $a_i(\max)$ and $a_i(\min)$ are the largest, respectively the smallest, non-zero entry in row $i$ of the constraint matrix.

In recent years, the research on online packing LPs was focused on the less pessimistic random-order model where it is possible to obtain $(1-\epsilon)$-competitive algorithms. In the existing work it is generally assumed that the capacity ratios are large and that this ratio is known at the beginning. The first results were independently presented by Feldman et al.~\cite{DBLP:conf/esa/FeldmanHKMS10} and by Agrawal et al.~\cite{DBLP:journals/corr/abs-0911-2974}, who all built on techniques developed by Devanur and Hayes~\cite{DBLP:conf/sigecom/DevanurH09} for the AdWords problem. Their model already allowed up to $K$ columns to arrive in every online step. The algorithm by Feldman et al.\ is $(1-\epsilon)$-competitive when $B = \Omega\big(\frac{m \log{(n K)} }{ \epsilon^3}\big)$ and $\OPT = \Omega\big(\frac{c_{\max} m \log{(n K)} }{\epsilon}\big)$, where $c_{\max}$ is the largest entry in the objective function vector. Agrawal et al.\ needed the assumption $B = \Omega\big(\frac{m \log{(\nicefrac{n K}{\epsilon})}}{\epsilon^2}\big)$ or $\OPT = \Omega\big(\frac{c_{\max}m^2 \log{(\nicefrac{n}{\epsilon})}}{\epsilon^2}\big)$. Additionally, the second paper provided a lower bound on $B$ of $B = \Omega\big(\frac{\log m}{\epsilon^2}\big)$ to allow for $(1-\epsilon)$-competitive algorithms. This bound is matched by our current result. Later, Molinaro and Ravi~\cite{DBLP:conf/icalp/MolinaroR12} presented a $(1-\epsilon)$-competitive algorithm assuming $B= \Omega\big(\frac{m^2 \log{(\nicefrac{m}{\epsilon})}}{\epsilon^2}\big)$ and thus removed the dependence on the number of requests $n$.

Online packing LPs have also been analyzed in a generalized i.i.d.\ model by Devanur et al.~\cite{DBLP:conf/sigecom/DevanurJSW11} who obtained a $(1-\epsilon)$-competitive algorithm when $B= \Omega\big(\frac{\log{(\nicefrac{m}{\epsilon})}}{\epsilon^2}\big)$ as well as a lower bound of $B= \Omega\big(\frac{\log{m}}{\epsilon^2}\big)$. However, this model is significantly weaker than the random-order model as discussed in detail by Molinaro and Ravi~\cite{DBLP:conf/icalp/MolinaroR12}. Besides, like all previously mentioned algorithms and in contrast to ours, also this model requires exact knowledge about $B$. Furthermore, these algorithms apply a primal-dual scheme with multiplicative increases. Ours instead uses a fundamentally different approach and can be combined with any technique to solve LPs. Finally, as already pointed out, our bounds depend on $d$ rather then $m$, which means we get significantly better bounds in sparse matrices.

Prior to the general packing programs, other more specialized allocation problems have been investigated in the random-order online model. Most of them are subsumed by the generalized assignment problem, which we analyze, too. Many are themselves generalizations of the secretary problem, which intrinsically assumes the random-order online model. An optimal $\nicefrac{1}{\e}$-competitive algorithm is known since the works of Lindley~\cite{lindley1961dynamic} and Dynkin~\cite{dynkin1963optimum}. If one allows to choose $k$ elements there is a $(1 - O(\sqrt{\nicefrac{1}{k}}))$-competitive algorithm by Kleinberg~\cite{DBLP:conf/soda/Kleinberg05}. A generalization with combinatorial flavor is the edge-weighted matching problem. Here, an optimal $\nicefrac{1}{\e}$-competitive algorithm by Kesselheim et al.~\cite{DBLP:conf/esa/KesselheimRTV13} is known that built on the work of Korula and P\'al~\cite{DBLP:conf/icalp/KorulaP09}. Another generalization that corresponds to linear programs with only one constraint is the online knapsack problem, where the best known competitive ratio is $\nicefrac{1}{10 \e}$ by Babaioff et al.~\cite{DBLP:conf/approx/BabaioffIKK07}. Our result for the generalized assignment problem improves on this ratio.

An important special case of the generalized assignment problem is the one where resource consumptions and profits are identical. In this case there are competitive algorithms in the worst-case model, even without assumptions on capacities. This line of research was initiated by Karp et al.~\cite{DBLP:conf/stoc/KarpVV90}, who gave an optimal $1-\nicefrac{1}{\e}$-competitive algorithm for unweighted bipartite matching in the worst-case online model. Mehta et al.~\cite{DBLP:journals/jacm/MehtaSVV07} introduced the AdWords problem and also presented a $1-\nicefrac{1}{\e}$-competitive algorithm in the worst-case online model. These problems were first considered in the random-order model by Goel and Mehta~\cite{DBLP:conf/soda/GoelM08}. Here, the best known competitive ratio for bipartite matching is $0.696$ by Mahdian and Yan~\cite{DBLP:conf/stoc/MahdianY11} (see also Karande et al.~\cite{DBLP:conf/stoc/KarandeMT11}). For the AdWords problem Devanur and Hayes~\cite{DBLP:conf/sigecom/DevanurH09} gave a primal-dual based $(1-\epsilon)$-competitive algorithm when the capacities are large. In the case of general capacities the best known competitive ratio for the AdWords problem in the random-order model is \nicefrac{1}{2}~\cite{DBLP:journals/jacm/MehtaSVV07}.

\section{A Robust Algorithm for Online Packing LPs}
\label{sec:algo}
% !TeX root = onlinelp.tex

We consider the following natural online algorithm: 
When the $\ell$th request, say with index $j$, arises, we compute the optimal fractional solution for all requests that we have seen up to this point with capacities scaled by a factor of $\frac{\ell}{n}$. 
Then the fractional allocation of request $j$ is used as guide to determine which column to choose by interpreting it as a probability distribution over the request's options.
The selected tentative allocation is carried out if the resulting solution is feasible with respect to the unscaled capacities. 
If it is not, the request is discarded.

\begin{algorithm}
\caption{Online packing LP\label{alg:packing_algorithm}}
\SetKwInOut{Input}{Input} \SetKwInOut{Output}{Output}
%\Input{capacity vector $b$ and number of requests $n$}
%\Output{integral allocation $y$}
%\BlankLine
Let $S$ be the index set of known requests, initially $S := \emptyset$\;
Set $y := 0$\;
\For(\tcp*[f]{steps $\ell = 1$ to $n$}){\emph{each arriving request} $j$}{
  Set $S := S \cup \{j\}$ and $\ell := |S|$\;
  Let $\tilde{x}^{(\ell)}$ be an optimal solution of the scaled LP $\max_{x \in \mathcal{P}(\frac{\ell}{n}, S)} c^T x$\;
  Choose an option $k^{(\ell)}$ (possibly none) where option $k$ has probability $\tilde{x}^{(\ell)}_{j, k}$; \tcp*[f]{rand.\ rounding}\\
  %Try option $k$ for request $j$, i.\,e.\ define $\mathbf{x^{(\ell)}}$ all-zero except $\mathbf{x^{(\ell)}}_j := \mathbf{e}_k$\;
  Define $x^{(\ell)}$ with $x^{(\ell)}_{j', k} = \begin{cases} 1 , & \text{if } j' = j \text{ and } k = k^{(\ell)};  \\
                                                                0 , & \text{otherwise} ;\end{cases}$ \tcp*[f]{tentative allocation}\\
  \If(\tcp*[f]{feasibility test}){$A(y+x^{(\ell)}) \leq b$}{
    Set $y := y+x^{(\ell)}$; \tcp*[f]{permanent online allocation}\\
  }
}
\end{algorithm}

Observe that this algorithm is invariant with respect to scaling constraints. Therefore, we can assume without loss of generality that $\max_{j \in [n], k \in [K]} a_{i, j, k} = 1$ for every constraint $i \in [m]$. In this case $B = b_{\min} := \min_{i \in [m]} b_i$.

In the following sections we analyze the above algorithm for large capacities $b_{\min} = \Omega\left(\frac{\log d}{\epsilon^2}\right)$ and small capacities $b_{\min} \geq 2$.
On a high level both proofs follow the same approach. First, for every round $\ell$, we bound the expected value of the locally optimal solution $\tilde{x}^{(\ell)}$. We exploit that this optimal solution is independent of the order of all known elements up to this point. Under these circumstances, we can interpret the current request as drawn uniformly at random from all known requests. This way we bound the expected value contributed to the solution of the tentative allocation. We furthermore obtain a bound on the probability that this allocation is feasible. This bound only uses the random order of the first $\ell - 1$ requests, which was irrelevant up to this point.

\subsection{High Capacities}
\label{sec:high-b}
% !TeX root = onlinelp.tex

In this section we show that the algorithm achieves a $1 - O\left( \sqrt{\frac{\log d}{b_{\min}}} \right)$ approximation. 
In other words, to get a $(1 - \epsilon)$-approximation, we require $b_{\min} = \Omega\left(\frac{\log d}{\epsilon^2}\right)$. 
However, also in case that $b_{\min}$ does not fulfill this bound for any $\epsilon < 1$, the algorithm still has near optimal performance guarantees. 
We cover these cases in Section~\ref{sec:small-b}.

Our analysis will proceed in three larger steps. First, we consider the value of the optimal solution of the scaled LP (Lemma~\ref{lemma:local_solution_value}). 
This way, for each round $\ell$, we can bound what the tentative selection in round $\ell$ would contribute to the objective function. 
Second, we obtain a bound on the probability that the capacity for a certain constraint is exhausted by round $\ell$ (Lemma~\ref{lemma:allocation_probability}). 
This allows us to estimate the probability that a tentative allocation can be carried out. 
Finally, we add up the expected value obtained in all rounds $\ell$.

\begin{lemma} \label{lemma:local_solution_value}
Let $S \subseteq [n]$ be a random subset of requests with $\lvert S \rvert = \ell$ and $\ell \geq 2 \sqrt{\frac{1 + \ln d}{b_{\min}}} n$.
Then we have
\[
\Ex{\max_{x \in \mathcal{P}(\frac{\ell}{n}, S)} c^T x} 
\geq \left(1 - 9 \sqrt{\frac{1 + \ln d}{\frac{\ell}{n} b_{\min}}} \right) \frac{\ell}{n} \cdot \max_{x \in \mathcal{P}(1, [n])} c^T x \enspace .
\]
\end{lemma}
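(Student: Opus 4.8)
The plan is to lower-bound the optimum of the scaled LP by exhibiting one explicit feasible solution whose expected value is large. Let $x^\ast$ be an optimal solution of the full LP, so $c^T x^\ast = \OPT := \max_{x \in \mathcal{P}(1,[n])} c^T x$, and let $\hat{x}$ denote its restriction to the sampled requests, i.e.\ $\hat{x}_{j,k} = x^\ast_{j,k}$ for $j \in S$ and $\hat{x}_{j,k}=0$ otherwise. Since $S$ is a uniformly random $\ell$-subset, every request lies in $S$ with probability $\ell/n$, so linearity of expectation gives $\Ex{c^T \hat x} = \frac{\ell}{n}\OPT$ and, for every constraint $i$, $\Ex{(A\hat x)_i} = \frac{\ell}{n}(Ax^\ast)_i \le \frac{\ell}{n} b_i$. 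Thus $\hat x$ already has the right value and is feasible for $\mathcal{P}(\frac{\ell}{n},S)$ in expectation; the whole difficulty is that individual constraints may overflow on some samples.

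To turn $\hat x$ into an honestly feasible solution I would scale it down by a factor $1-\beta$ with $\beta = \Theta\!\left(\sqrt{(1+\ln d)/(\tfrac{\ell}{n} b_{\min})}\right)$ and argue that $(1-\beta)\hat x \in \mathcal{P}(\frac{\ell}{n},S)$ with high probability. The box constraints $\sum_k (1-\beta)\hat x_{j,k} \le 1$ hold deterministically, so only the capacity constraints are at stake. For a fixed constraint $i$, the quantity $(A\hat x)_i = \sum_{j \in S}\sum_k a_{i,j,k} x^\ast_{j,k}$ is a sum of the $\ell$ sampled per-request contributions, each lying in $[0,1]$ after the normalization $a_{i,j,k}\le 1$. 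Because $S$ is drawn without replacement, a Chernoff/Hoeffding bound for sampling without replacement applies and is no weaker than in the i.i.d.\ case, yielding $\Pr{(A\hat x)_i > \tfrac{1}{1-\beta}\tfrac{\ell}{n} b_i} \le \exp\!\big(-c\,\beta^2\,\tfrac{\ell}{n} b_i\big)$ for an absolute constant $c$; with my choice of $\beta$ this per-constraint failure probability is polynomially small in $d$.

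The crucial --- and hardest --- step is to convert this per-constraint guarantee into a statement about \emph{all} constraints at cost $\sqrt{\ln d}$ rather than $\sqrt{\ln m}$. A naive union bound over the $m$ capacity constraints would force $\beta^2 \tfrac{\ell}{n} b_{\min} \gtrsim \ln m$ and only reprove the known $\log m$ result. The column-sparsity hypothesis must be used here: each column $(j,k)$ has at most $d$ nonzero entries, so $\sum_i (Ax^\ast)_i = \sum_{j,k} x^\ast_{j,k}\sum_i a_{i,j,k} \le d\sum_{j,k} x^\ast_{j,k} \le dn$, which means only few constraints can carry enough expected load $\mu_i = \tfrac{\ell}{n}(Ax^\ast)_i$ to be in danger of overflowing. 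Rather than demanding simultaneous feasibility, I would bound the \emph{expected repair cost}: take the largest $\theta$ with $\theta \hat x$ feasible, so $1-\theta = \max_i \big((A\hat x)_i - \tfrac{\ell}{n} b_i\big)_+ / (A\hat x)_i$, and control $\Ex{1-\theta}$ by charging each overflowing constraint only against the columns that actually touch it. Exploiting that each column touches at most $d$ constraints is what replaces the factor $m$ by $d$ inside the logarithm; this is the step I expect to require the most care.

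Finally I would assemble the pieces. Since $\hat x \le x^\ast$ coordinatewise and $c \ge 0$, we always have $c^T \hat x \le \OPT$, so $\Ex{\max_{x \in \mathcal{P}(\frac{\ell}{n},S)} c^T x} \ge \Ex{\theta\, c^T \hat x} \ge \Ex{c^T \hat x} - \OPT\cdot\Ex{1-\theta} = \tfrac{\ell}{n}\OPT - \OPT\cdot\Ex{1-\theta}$. Plugging in the bound on $\Ex{1-\theta}$ and the optimal choice of $\beta$ turns the loss into $O\!\big(\sqrt{(1+\ln d)/(\tfrac{\ell}{n} b_{\min})}\big)\cdot \tfrac{\ell}{n}\OPT$, where the explicit constant $9$ absorbs the Chernoff constant $c$ together with the slack from $\tfrac{1}{1-\beta}\le 1+2\beta$. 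The hypothesis $\ell \ge 2\sqrt{(1+\ln d)/b_{\min}}\;n$ is exactly what guarantees $\beta < 1$ (equivalently $\tfrac{\ell}{n} b_{\min} \ge 4(1+\ln d)$), so that the scaling is well-defined and the stated bound is non-vacuous.
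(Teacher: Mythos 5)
Your setup (restrict $x^\ast$ to $S$, note $\Ex{c^T\hat{x}} = \frac{\ell}{n}\OPT$, repair feasibility by scaling, use concentration for sampling without replacement) matches the paper's first steps, and you correctly identify that the whole game is getting $\ln d$ instead of $\ln m$. But the concrete mechanism you propose for that step fails. You define a single \emph{global} scaling factor $\theta$, so that $1-\theta$ is a maximum of relative overflows over \emph{all} $m$ constraints, and you then want $\Ex{1-\theta} = O\bigl(\sqrt{(1+\ln d)/(\tfrac{\ell}{n}b_{\min})}\bigr)$. This cannot work: consider $d=1$ with $m$ constraints touched by disjoint sets of columns, each constraint tight in $x^\ast$. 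Each load $(A\hat{x})_i$ concentrates individually, but the expected \emph{maximum} relative overflow across $m$ nearly independent sums scales like $\sqrt{\ln m/(\tfrac{\ell}{n}b_{\min})}$, so your route would only reprove the $\log m$ bound. Your auxiliary argument that the aggregate bound $\sum_i (Ax^\ast)_i \le dn$ limits how many constraints are ``in danger'' is also not sound: all $m$ constraints can simultaneously have expected load close to $\tfrac{\ell}{n}b_i$ (e.g.\ $b_i = b_{\min}$ with $m\,b_{\min} \le dn$), so nothing restricts the number of near-tight constraints. Your sentence about ``charging each overflowing constraint only against the columns that actually touch it'' gestures at the right idea, but a global $\theta$ does the opposite — one bad constraint scales down every column — and you explicitly leave this step open.

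The paper's missing ingredient is to make the repair \emph{per column} rather than global: for each variable $(j,k)$ with support $C_{j,k} = \{i : a_{i,j,k} > 0\}$, set $F_{j,k} = \min\bigl\{1, \min_{i \in C_{j,k}} \frac{(\ell/n)\,b_i}{(Ax')_i}\bigr\}$ and $x''_{j,k} = x'_{j,k}F_{j,k}$. Then $x''$ is \emph{deterministically} feasible for $\mathcal{P}(\frac{\ell}{n},S)$, and the value lost on column $(j,k)$ depends only on the at most $d$ constraints in $C_{j,k}$, so the union bound inside the Chernoff argument is over $d$ events per column — this is exactly where $m$ becomes $d$. Lower-bounding $\Ex{F_{j,k}}$ then requires integrating the Chernoff tail over a grid of scales (the paper's step width $\xi$), and one more subtlety you would hit on this route: since $x'_{j,k} \neq 0$ only when $j \in S$, and $F_{j,k}$ is correlated with that event, the concentration bound must be run in the conditional space given $j \in S$, which shifts the mean by $+1$ and is absorbed using $\ell \ge 2n/\sqrt{b_{\min}}$ — one of the two places the hypothesis on $\ell$ is genuinely used. (Minor additional slip: the hypothesis does \emph{not} imply $\tfrac{\ell}{n}b_{\min} \ge 4(1+\ln d)$ in general, only when $b_{\min} \gtrsim \ln d$; for smaller $b_{\min}$ the lemma's conclusion is vacuous, which is how the paper's statement survives.)
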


\begin{proof}
Let $x^\ast$ be an optimal solution of the full unscaled LP, i.\,e.\ of $\max_{x \in \mathcal{P}(1, [n])} c^T x$.
Project it to the set of relevant requests by setting $x'_{j, k} = x^\ast_{j, k}$ if $j \in S$ and $x'_{j, k} = 0$ otherwise. 
Obviously, we have $\Ex{x'_{j,k}} = \frac{\ell}{n} x^\ast_{j,k}$.
For each $(j, k) \in [n] \times [K]$, let $C_{j,k} = \{ i \in [m] \mid a_{i,j,k} > 0\}$ be the set of constraints that are influenced by the variable $x_{j, k}$.
Furthermore, we define for each variable a scaling factor $F_{j, k} = \min \left\{1,  \min_{i \in C_{j, k}} \frac{\nicefrac{\ell}{n} \cdot b_i}{(A x')_i} \right\}$ and set $x''_{j,k} = x'_{j,k} F_{j,k}$.
Now $x''$ satisfies $A x'' \leq \frac{\ell}{n} b$ since 
$(Ax'')_i = \sum_{j,k} a_{i,j,k} x'_{j,k} F_{j,k} \leq \sum_{j,k} a_{i,j,k} x'_{j,k} \frac{\nicefrac{\ell}{n} \cdot b_i}{(A x')_i} = \frac{\ell}{n} b_i$. 
In the rest of the proof we will analyze the expected value of the scaling factors $F_{j,k}$ in order to show 
$\Ex{x''_{j,k}} \geq \left(1 - 9 \sqrt{\frac{1 + \ln d}{\nicefrac{\ell}{n} \cdot b_{\min}}}\right) \Ex{x'_{j,k}}$, which gives the result.

Fix a request $\tilde{j} \in [n]$. We first turn our considerations to the conditional probability space in which $\tilde{j} \in S$. Furthermore, fix a constraint $i \in [m]$. The random variable $(Ax')_i$ is a sum of random variables from $[0, 1]$. Let us show that it is possible to apply a Chernoff bound, even though the involved random variables are not independent. For this purpose, we define $X_j = \sum_{k \in [K]} a_{i, j, k} x_{j, k}'$ for each request and show that these random variables are $1$-correlated in the sense of \cite{Panconesi1997}. For each $X_j$, $j \in [n] \setminus \{ \tilde{j} \}$, we define the respective twin variable $\hat{X}_j$ to be set to $\sum_{k \in [K]} a_{i, j, k} x_{j, k}^\ast$ with probability $\frac{\ell}{n}$ and $0$ otherwise. This obviously yields $\Ex{X_j \growingmid \tilde{j} \in S} \leq \Ex{\hat{X}_j \growingmid \tilde{j} \in S}$. Let $I \subseteq [n] \setminus \{ \tilde{j} \}$ and for $j \in I$ let $s_j$ be an arbitrary positive integer. We have
\begin{align*}
\Ex{ \prod_{j \in I} X_j^{s_j} \growingmid \tilde{j} \in S} 
& = \Ex{ \prod_{j \in I} \left(\sum_{k \in [K]} a_{i, j, k} x_{j, k}'\right)^{s_j} \growingmid \tilde{j} \in S} = \left( \prod_{j \in I} \left(\sum_{k \in [K]} a_{i, j, k} x_{j, k}^\ast \right)^{s_j} \right) \Pr{S \supseteq I \growingmid \tilde{j} \in S} \\
& = \left( \prod_{j \in I} \left(\sum_{k \in [K]} a_{i, j, k} x_{j, k}^\ast \right)^{s_j} \right) \frac{\binom{n - \lvert I \rvert - 1}{\ell - \lvert I \rvert - 1}}{\binom{n - 1}{\ell - 1}} \leq \left( \prod_{j \in I} \left(\sum_{k \in [K]} a_{i, j, k} x_{j, k}^\ast \right)^{s_j} \right) \left( \frac{\ell}{n} \right)^{\lvert I \rvert} \\
& = \prod_{j \in I}\left( \left(\sum_{k \in [K]} a_{i, j, k} x_{j, k}^\ast \right)^{s_j} \frac{\ell}{n}\right) 
= \prod_{j \in I} \Ex{\hat{X}_j^{s_j} \growingmid \tilde{j} \in S} \enspace,
\end{align*}
which shows that the random variables $X_j$ are $1$-correlated.

Furthermore, $\Ex{(Ax')_i \growingmid \tilde{j} \in S} \leq \frac{\ell}{n} b_i + 1$ because $x^\ast$ is a feasible LP solution and $a_{i, j, k} \leq 1$.
Therefore, we can apply a Chernoff bound on $(Ax')_i$ to get
\[
\Pr{(Ax')_i \geq (1 + \delta) \left( \frac{\ell}{n} b_i + 1\right) \growingmid \tilde{j} \in S} \leq \exp\left( - \frac{\delta^2}{3} \left( \frac{\ell}{n} b_i + 1\right) \right) \leq \exp\left( - \frac{\delta^2}{3}  \frac{\ell}{n} b_{\min} \right) \enspace .
\]

Observe that $\frac{\nicefrac{\ell}{n} \cdot b_i}{(Ax')_i} \leq \frac{1}{1 + \delta + \nicefrac{1}{\sqrt{b_{\min}}}}$ implies $(Ax')_i \geq (1 + \delta) \left( \frac{\ell}{n} b_i + 1\right)$ where we use $\ell \geq \frac{2n}{\sqrt {b_{\min}}}$. 
Applying the definition of $F_{\tilde{j}, k}$ and a union bound we obtain
\begin{align}
\Pr{F_{\tilde{j}, k} \leq \frac{1}{1 + \delta + \frac{1}{\sqrt{b_{\min}}}} \growingmid \tilde{j} \in S} 
& \leq \sum_{i \in C_{\tilde{j}, k}} \Pr{\frac{\frac{\ell}{n} b_i}{(Ax')_i} \leq \frac{1}{1 + \delta + \frac{1}{\sqrt{b_{\min}}}}\growingmid \tilde{j} \in S} \nonumber \\
& \leq \sum_{i \in C_{\tilde{j}, k}} \Pr{(Ax')_i \geq (1 + \delta) \left( \frac{\ell}{n} b_i + 1\right) \growingmid \tilde{j} \in S} \nonumber \\
& \leq d \exp\left( - \frac{\delta^2}{3}  \frac{\ell}{n} b_{\min} \right) \enspace .  \label{ineq:main_chernoff}
\end{align}
Set the step width $\xi = \sqrt{3 \cdot \frac{1 + \ln d}{\nicefrac{\ell}{n} \cdot b_{\min}}}$. 
We can bound the expectation of $F_{\tilde{j}, k}$ by using
\begin{align*}
\Ex{F_{\tilde{j}, k} \growingmid \tilde{j} \in S} 
& \geq \sum_{i=0}^\infty \frac{1}{1 + (i+1)\xi} \cdot \Pr{\frac{1}{1 + (i+1)\xi} < F_{\tilde{j}, k} \leq \frac{1}{1 + i\xi} \growingmid \tilde{j} \in S} \\
& = \sum_{i=0}^\infty \left( 1 - \frac{(i+1)\xi}{1 + (i+1)\xi} \right) \cdot \Pr{\frac{1}{1 + (i+1)\xi} < F_{\tilde{j}, k} \leq \frac{1}{1 + i\xi} \growingmid \tilde{j} \in S} \\
& = 1 - \sum_{i=0}^\infty \frac{(i+1)\xi}{1 + (i+1)\xi} \cdot \Pr{\frac{1}{1 + (i+1)\xi} < F_{\tilde{j}, k} \leq \frac{1}{1 + i\xi} \growingmid \tilde{j} \in S} \enspace .
\end{align*}
To bound the sum, we split it into three parts, consisting of the ranges $i \in \{0, 1, 2\}$,
$i \in \left\{3, \ldots, \left\lfloor \nicefrac{1}{\xi} + 1\right\rfloor \right\}$ and $i \in \left\{\left\lfloor \nicefrac{1}{\xi} + 1\right\rfloor + 1, \ldots \right\}$.

The sum for $i \in \{0, 1, 2\}$ can be bounded by
\[
\sum_{i \in \{0, 1, 2\}}\frac{(i+1)\xi}{1 + (i+1)\xi} \cdot \Pr{\frac{1}{1 + (i+1)\xi} < F_{\tilde{j}, k} \leq \frac{1}{1 + i\xi} \growingmid \tilde{j} \in S} \leq 3 \xi \enspace ,
\]
where we used that the sum of probabilities is at most one and the largest term is the one for $i=2$.

In case $i \in \left\{3, \ldots, \left\lfloor \nicefrac{1}{\xi} + 1\right\rfloor \right\}$, we set $\delta = (i-1)\xi$. 
Since this implies $\delta \in (0, 1]$ and we have $\xi \geq \frac{1}{\sqrt{b_{\min}}}$, inequality (\ref{ineq:main_chernoff}) gives us
\[
\Pr{F_{\tilde{j}, k} \leq \frac{1}{1 + i\xi} \growingmid \tilde{j} \in S} 
\leq d \exp\left( - \frac{(i-1)^2 \xi^2}{3}  \frac{\ell}{n} b_{\min} \right) 
\leq d \exp\left( - (i-1)^2 (1 + \ln d) \right) 
\leq \exp\left( -i+1 \right) \enspace .
\]
Using this to bound the second sum we obtain
\begin{align*}
\sum_{i = 3}^{\left\lfloor \frac{1}{\xi} + 1\right\rfloor} &\frac{(i+1)\xi}{1 + (i+1)\xi} \cdot \Pr{\frac{1}{1 + (i+1)\xi} < F_{\tilde{j}, k} \leq \frac{1}{1 + i\xi} \growingmid \tilde{j} \in S} \\
& \leq \sum_{i = 3}^{\left\lfloor \frac{1}{\xi} + 1 \right\rfloor} (i+1)\xi \cdot \Pr{F_{\tilde{j}, k} \leq \frac{1}{1 + i\xi} \growingmid \tilde{j} \in S} 
 \leq \sum_{i = 3}^{\left\lfloor \frac{1}{\xi} + 1\right\rfloor} (i + 1)\xi \cdot \exp\left( -i+1 \right) 
 \leq \frac{4 \e - 3}{\e (\e-1)^2} \xi \leq \xi \enspace .
\end{align*}

Finally, for $i \in \left\{\left\lfloor \nicefrac{1}{\xi} + 1\right\rfloor + 1, \ldots \right\}$ we combine the terms to use inequality (\ref{ineq:main_chernoff}) only once, and obtain
\begin{align*}
\sum_{i = \left\lfloor \frac{1}{\xi} + 1\right\rfloor + 1}^{\infty} &\frac{(i+1)\xi}{1 + (i+1)\xi} \cdot \Pr{\frac{1}{1 + (i+1)\xi} < F_{\tilde{j}, k} \leq \frac{1}{1 + i\xi} \growingmid \tilde{j} \in S}\\ 
&\leq \Pr{F_{\tilde{j}, k} \leq \frac{1}{1 + \left( \left\lfloor \frac{1}{\xi} + 1\right\rfloor + 1 \right)\xi} \growingmid \tilde{j} \in S} 
 \leq \Pr{F_{\tilde{j}, k} \leq \frac{1}{2 + \frac{1}{\sqrt{b_{\min}}}} \growingmid \tilde{j} \in S} \\
&\leq d \exp\left( - \frac{1}{3}  \frac{\ell}{n} b_{\min} \right) 
 \leq \exp\left( - \frac{1}{6}  \frac{\ell}{n} b_{\min} \right) \leq \xi \enspace ,
\end{align*}
where we used that $\frac{\ell}{n} b_{\min} \geq 6 \ln d$.

Combining the three bounds we obtain $\Ex{F_{\tilde{j}, k} \growingmid \tilde{j} \in S} \geq 1 - 5\xi \geq 1 - 9  \sqrt{ \frac{1 + \ln d}{\nicefrac{\ell}{n} \cdot b_{\min}}}$.
This gives the claimed $\Ex{x''_{\tilde{j}, k}} \geq \left(1 - 9  \sqrt{ \frac{1 + \ln d}{\nicefrac{\ell}{n} \cdot b_{\min}}} \right) \Ex{x'_{\tilde{j}, k}}$, since $x_{\tilde{j}, k}' = 0$ for $\tilde{j} \notin S$.
\end{proof}

A tentative allocation can only be made permanent if the remaining capacity in every involved constraint is high enough. In the next lemma, we will show that in most rounds, this is the case with sufficiently high probability. Instead of bounding the consumption of the actual allocation, we will consider the previous tentative allocations. It is relatively easy to see that in expectation their consumption is not too high. To get the probability bound, we apply a Chernoff bound. However, we need to be very careful here as the tentative allocation in a round is obviously correlated to tentative allocations in previous rounds. Fortunately, we are able to show that even conditioned on outcomes in later rounds, the randomization in earlier rounds can still be considered unbiased. A simpler version of this kind of argument has already been applied in \cite{DBLP:conf/esa/KesselheimRTV13}.

\begin{lemma} \label{lemma:allocation_probability}
Consider round $\ell$ with $9 \sqrt{\frac{1 + \ln d}{b_{\min}}}  n \leq \ell \leq \left ( 1 - 9 \sqrt{\frac{1 + \ln d}{b_{\min}}}\right)  n$. 
Let $S \subseteq [n]$, $\lvert S \vert = \ell - 1$, be any set of $\ell - 1$ requests and let $\mathcal{E}_S$ be the event that the requests in $S$ come within the first $\ell - 1$ steps of the random input order.
Then, conditioned on $\mathcal{E}_S$, the sum of previous tentative allocations violate any constraint $i$ with probability at most
\[
\Pr{{\left(\sum_{\ell' < \ell} A x^{(\ell')}\right)_i} > b_i - 1 \growingmid \mathcal{E}_S} \leq \frac{1}{d} \exp\left( - \frac{n - \ell}{n} \sqrt{b_i} \right) \enspace .
\]
\end{lemma}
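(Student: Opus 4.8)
The quantity to control is $W := \left(\sum_{\ell' < \ell} A x^{(\ell')}\right)_i = \sum_{\ell'=1}^{\ell-1} Z_{\ell'}$, where $Z_{\ell'} := (A x^{(\ell')})_i = a_{i, j^{(\ell')}, k^{(\ell')}}$ is what the tentative allocation of round $\ell'$ puts on constraint $i$. After the normalization $\max_{j,k} a_{i,j,k} = 1$ each $Z_{\ell'}$ lies in $[0,1]$, so the natural plan is to prove a Chernoff-type upper-tail bound for $W$ conditioned on $\mathcal{E}_S$. The obstacle is that the $Z_{\ell'}$ are strongly dependent: the locally optimal solution $\tilde{x}^{(\ell')}$ depends on the entire set $S_{\ell'}$ of the first $\ell'$ requests, and these sets are nested across rounds. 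My plan is therefore to bound the conditional moment generating function $\Ex{e^{tW} \growingmid \mathcal{E}_S}$ directly, by revealing the random order \emph{backwards} and showing that each factor contributes at most its worst-case amount no matter what has been revealed so far.

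Concretely, under $\mathcal{E}_S$ the first $\ell-1$ positions carry a uniformly random permutation of $S$. I would reveal the requests (and their rounding outcomes) at positions $\ell-1, \ell-2, \ldots$ one at a time, letting $\mathcal{F}_{\ell'+1}$ denote the information revealed for positions $\ell'+1, \ldots, \ell-1$. The decisive observation is that $\mathcal{F}_{\ell'+1}$ already determines the set $S_{\ell'}$ of requests occupying positions $1, \ldots, \ell'$, namely $S$ minus the later-revealed elements, hence it determines $\tilde{x}^{(\ell')}$; yet, conditioned on $\mathcal{F}_{\ell'+1}$, the request at position $\ell'$ is still uniform over $S_{\ell'}$ and its rounding is fresh. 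This is exactly the ``unbiased even when conditioned on the future'' phenomenon, a refinement of the argument in \cite{DBLP:conf/esa/KesselheimRTV13}. It gives $\Ex{Z_{\ell'} \growingmid \mathcal{F}_{\ell'+1}} = \frac{1}{\ell'}(A\tilde{x}^{(\ell')})_i \leq \frac{1}{\ell'}\cdot\frac{\ell'}{n} b_i = \frac{b_i}{n}$, using $\tilde{x}^{(\ell')} \in \mathcal{P}(\frac{\ell'}{n}, S_{\ell'})$. Since $Z_{\ell'} \in [0,1]$, convexity yields the \emph{deterministic} bound $\Ex{e^{t Z_{\ell'}} \growingmid \mathcal{F}_{\ell'+1}} \leq 1 + (e^t-1)\frac{b_i}{n} \leq \exp\!\left((e^t-1)\frac{b_i}{n}\right)$ for every $t \geq 0$.

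With this in hand I would peel off the factors of $e^{tW} = \prod_{\ell'=1}^{\ell-1} e^{t Z_{\ell'}}$ in the order $\ell' = 1, 2, \ldots, \ell-1$ via the tower rule: at each step the factors for the larger positions are $\mathcal{F}_{\ell'+1}$-measurable and pull out, while $e^{t Z_{\ell'}}$ is replaced by the deterministic factor $\exp((e^t-1)b_i/n)$. Iterating gives $\Ex{e^{tW} \growingmid \mathcal{E}_S} \leq \exp\!\left((e^t-1)\frac{\ell}{n} b_i\right)$, i.e.\ the MGF of a sum of independent contributions with total mean $\mu \leq \frac{\ell}{n} b_i$. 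A Markov step then produces $\Pr{W > b_i - 1 \growingmid \mathcal{E}_S} \leq \exp\!\left((e^t-1)\mu - t(b_i-1)\right)$.

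It remains to optimize over $t$ (equivalently, to invoke the standard bound $\Pr{W \geq a} \leq \exp(a - \mu - a\ln\frac{a}{\mu})$ with $a = b_i - 1$, which is monotone increasing in $\mu$, so substituting $\mu = \frac{\ell}{n} b_i$ is the safe direction) and to check the arithmetic against the target $\frac{1}{d}\exp\!\left(-\frac{n-\ell}{n}\sqrt{b_i}\right)$. Writing $u = \frac{n-\ell}{n}$, the exponent is, up to the $b_i$-versus-$(b_i-1)$ correction, $-b_i\big(-u - \ln(1-u)\big)$; using $-u-\ln(1-u) \geq \frac{u^2}{2}$ together with the hypothesis $\ell \leq \big(1 - 9\sqrt{(1+\ln d)/b_{\min}}\big)n$ — which forces $u\sqrt{b_i} \geq 9\sqrt{1+\ln d}$ and $u^2 b_i \geq 81(1+\ln d)\frac{b_i}{b_{\min}}$ — one verifies that the exponent is at most $-\ln d - u\sqrt{b_i}$ with ample slack, the factor $\frac{1}{d} = e^{-\ln d}$ arising precisely because the capacity-ratio regime makes $u^2 b_i$ comfortably exceed $\ln d$. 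The main obstacle is the first part, namely legitimizing a Chernoff bound despite the strong inter-round dependence; the backward-revelation argument is what carries it, after which the tail computation is routine, the only mild nuisances being the additive ``$-1$'' in the threshold and tracking $b_i \geq b_{\min}$.
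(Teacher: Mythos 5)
Your proposal is correct, and its probabilistic core coincides with the paper's: conditioned on $\mathcal{E}_S$ and on everything revealed about rounds $\ell'+1,\ldots,\ell-1$ (which determines the prefix set $S_{\ell'}$, hence $\tilde{x}^{(\ell')}$, while leaving the order of the first $\ell'$ requests unbiased), the round-$\ell'$ request is uniform over $S_{\ell'}$ and the per-round conditional mean is $\frac{1}{\ell'}(A\tilde{x}^{(\ell')})_i \leq \frac{b_i}{n}$. Where you diverge is the concentration machinery. The paper packages this observation as an inductive proof that the variables $X_{\ell'} = (Ax^{(\ell')})_i$ are $1$-correlated in the sense of Panconesi and Srinivasan, i.e.\ $\Ex{\prod_{\ell' \in I} X_{\ell'}^{s_{\ell'}} \growingmid \mathcal{E}_S} \leq \prod_{\ell' \in I} \frac{b_i}{n}$, and then invokes their transfer theorem to import the standard Chernoff bound; you instead bound the conditional moment generating function directly by backward revelation and the tower rule, peeling off one factor $\Ex{e^{tZ_{\ell'}} \growingmid \mathcal{F}_{\ell'+1}} \leq \exp\left((e^t-1)\frac{b_i}{n}\right)$ at a time. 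These are essentially equivalent in content (your $\mathcal{F}$-conditional mean bound is exactly the paper's inductive step), but your route is self-contained---no external $1$-correlation theorem---and makes the supermartingale structure explicit, which is arguably cleaner. The tail arithmetic also differs in form: the paper chooses $\delta = 1 - \frac{\ell}{n}\cdot\frac{b_i}{b_i-1}$ so that the threshold $b_i - 1$ is handled exactly inside the exponent $\frac{\delta^2}{3}(1-\delta)(b_i-1)$, verified against $\ln d + \frac{n-\ell}{n}\sqrt{b_i}$ using \emph{both} ends of the range of $\ell$; you use the optimized bound $\exp\left(a - \mu - a\ln\frac{a}{\mu}\right)$ with $\mu = \frac{\ell}{n}b_i$, correctly noting that monotonicity in $\mu$ makes the substitution safe. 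One caution about your ``mild nuisance'': since $\frac{\partial}{\partial a}\left(a - \mu - a\ln\frac{a}{\mu}\right) = -\ln\frac{a}{\mu}$, replacing $a = b_i$ by $a = b_i - 1$ weakens the exponent by up to $\ln\frac{1}{1-u}$, and after the lossy estimate $-u - \ln(1-u) \geq \frac{u^2}{2}$ this correction is \emph{not} absorbed by the upper bound on $\ell$ alone (your only cited hypothesis): for $u$ close to $1$ you need the lemma's lower bound $\ell \geq 9\sqrt{(1+\ln d)/b_{\min}}\,n$, which caps $\ln\frac{1}{1-u}$ by roughly $\frac{1}{2}\ln b_{\min}$ and is then absorbed by a short case split on whether $u \leq \frac{1}{2}$ (alternatively, keep the exact form, which only improves as $\mu \to 0$). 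With that detail filled in, the inequalities close with the slack you claim.
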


\begin{proof}
We will first argue that the involved random variables $X_{\ell'} := (A x^{(\ell')})_i$ allow applying a Chernoff bound, even though they are not independent. 
For this purpose, we follow the approach in \cite{Panconesi1997} and show that they are $1$-correlated. 
For each $X_{\ell'}$, $1 \leq \ell' \leq \ell - 1$, we define the respective twin variable $\hat{X}_{\ell'}$ to be set to $1$ with probability $\frac{b_i}{n}$ and $0$ otherwise. Now we need to show that for each set $I \subseteq [\ell - 1]$ and positive integers $s_{\ell'}$, $\ell' \in I$, we have
\[
\Ex{\prod_{\ell' \in I} X_{\ell'}^{s_{\ell'}} \growingmid \mathcal{E}_S } \leq \prod_{\ell' \in I} \frac{b_i}{n} \enspace.
\]
We show this claim by induction on $\lvert I \rvert$. For $\lvert I \rvert = 0$, the statement is trivially true. 
So let us consider the case that $I = \{ \ell'_1 \} \cup I'$ with $\ell'_1 < \ell'$ for all $\ell' \in I'$. 
By induction hypothesis, we already know
\[
\Ex{\prod_{\ell' \in I'} X_{\ell'}^{s_{\ell'}} \growingmid \mathcal{E}_S } \leq \prod_{\ell' \in I'} \frac{b_i}{n} \enspace.
\]
Let $a \geq 0$ be an arbitrary real number and consider the conditional probability space in which not only $\mathcal{E}_S$ but also $\prod_{\ell' \in I'} X_{\ell'}^{s_{\ell'}} = a$. The important observation is the following. Let $\pi$ be any random order that causes $\mathcal{E}_S$ and $\prod_{\ell' \in I'} X_{\ell'}^{s_{\ell'}} = a$. Then any random order $\pi'$ which differs only in positions $1, \ldots, \ell'_1$ also yields $\prod_{\ell' \in I'} X_{\ell'}^{s_{\ell'}} = a$ 
as the exact order of the requests $1, \ldots, \ell'_1$ is irrelevant for the LP solution computed in rounds $\ell' > \ell'_1$. 

In other words, to bound $\Ex{X_{\ell'_1} \growingmid \mathcal{E}_S \wedge \prod_{\ell' \in I'} X_{\ell'}^{s_{\ell'}} = a }$, we may still consider the order of the requests $1, \ldots, \ell'_1$ as unbiased. Formally, let $S' \subseteq [n]$, $\lvert S' \vert = \ell'_1$, be any set of $\ell'_1$ requests and let $\mathcal{E}_{S'}$ be the event that the requests in $S'$ come within the first $\ell'_1$ steps of the random input order. Then, conditioned on $\mathcal{E}_{S'} \wedge \mathcal{E}_S \wedge \prod_{\ell' \in I'} X_{\ell'}^{s_{\ell'}} = a$, each request from $S'$ comes at position with $\ell'_1$ with probability $\frac{1}{\ell'_1}$. In other words, it can be considered uniformly drawn from $S'$. Given a fixed event $\mathcal{E}_{S'}$, the LP solution $\tilde{x}^{(\ell'_1)}$ computed in round $\ell'_1$ is fixed. By definition $(A\tilde{x}^{(\ell'_1)})_i \leq \frac{\ell'_1}{n} b_i$. Therefore we know that, even in the conditional probability space of $\mathcal{E}_{S'}$, $\mathcal{E}_S$, and $\prod_{\ell' \in I'} X_{\ell'}^{s_{\ell'}} = a$, the expected contribution of $x^{(\ell'_1)}$ to constraint $i$ is bounded by $\frac{1}{\ell'_1} \frac{\ell'_1}{n} b_i$. So we get
\[
\Ex{X_{\ell'_1} \growingmid  \mathcal{E}_S \wedge \prod_{\ell' \in I'} X_{\ell'}^{s_{\ell'}} } 
= \frac{1}{\ell'_1} \Ex{(A\tilde{x}^{(\ell'_1)})_i \growingmid \mathcal{E}_S \wedge \prod_{\ell' \in I'} X_{\ell'}^{s_{\ell'}} } 
\leq \frac{1}{\ell'_1} \cdot \frac{\ell'_1}{n}b_i = \frac{b_i}{n} \enspace.
\]
As $X_{\ell'_1} \in [0, 1]$ and $s_{\ell'_1} \geq 1$, this bound also holds for $X_{\ell'_1}^{s_{\ell'_1}}$. Therefore, we get
\[
\Ex{\prod_{\ell' \in I} X_{\ell'}^{s_{\ell'}} \growingmid \mathcal{E}_S} = \Ex{X_{\ell'_1}^{s_{\ell'_1}} \prod_{\ell' \in I'} X_{\ell'}^{s_{\ell'}} \growingmid \mathcal{E}_S } \leq \frac{b_i}{n} \Ex{\prod_{\ell' \in I'} X_{\ell'}^{s_{\ell'}} \growingmid \mathcal{E}_S} \leq \frac{b_i}{n} \prod_{\ell' \in I'} \frac{b_i}{n} = \prod_{\ell' \in I} \frac{b_i}{n} \enspace,
\]
which shows that the random variables $X_{\ell'}$ are $1$-correlated.

Let us now apply a Chernoff bound on $\sum_{\ell' < \ell} X_{\ell'}$. First of all, we observe
\[
\Ex{X_{\ell'}} \leq \Ex{\hat{X}_{\ell'}} \quad \text{ and } \quad \Ex{\sum_{\ell' < \ell} \hat{X}_{\ell'} \growingmid \mathcal{E}_S } \leq \sum_{\ell' = 1}^{\ell - 1} \frac{b_i}{n} \leq \frac{\ell}{n} b_i \enspace.
\]
Now set $\delta = 1 - \frac{\ell}{n} \frac{b_i}{b_i - 1}$ so that $(1 - \delta) (b_i - 1) = \frac{\ell}{n} b_i \geq \Ex{\sum_{\ell' < \ell} \hat{X}_{\ell'} \growingmid \mathcal{E}_S }$. 
By the bound on $\ell$, we have $\frac{1}{b_i - 1} \leq \frac{1}{\sqrt{b_i}} \leq \frac{1}{9} \left( 1 - \frac{\ell}{n} \right)$, which implies 
\begin{equation}
\delta = \left(1 - \frac{\ell}{n}\right) - \frac{\ell}{n} \cdot \frac{1}{b_i - 1} \geq \frac{8}{9} \left( 1 - \frac{\ell}{n} \right) \enspace.
\label{eq:deltabound}
\end{equation}
On the other hand we have $1 - \delta \geq \frac{\ell}{n}$
and so we get 
$\min\left\{ \delta, 1 - \delta \right\} \geq \min\left\{ \frac{8}{9} ( 1 - \frac{\ell}{n} ), \frac{\ell}{n} \right\} \geq 8 \sqrt{\frac{1 + \ln d}{b_{\min}}}$.
Using furthermore the fact that $\delta (1 - \delta) \geq \frac{1}{2} \min\left\{ \delta, 1 - \delta \right\}$, we obtain
\[
\frac{b_i - 1}{3} \delta (1 - \delta) \geq \frac{b_i - 1}{3} \cdot 4 \sqrt{\frac{1 + \ln d}{b_{\min}}} %\geq \frac{4}{3} \sqrt{b_i \left( 1 + \ln d \right)} 
\enspace.
\]
By multiplying this inequality with \eqref{eq:deltabound}, we get
\[
\frac{\delta^2}{3}  (1 - \delta) (b_i - 1) \geq  \frac{8}{9} \left( 1 - \frac{\ell}{n} \right) \frac{4(b_i - 1)}{3} \sqrt{\frac{ 1 + \ln d }{b_{\min}}} \geq \left( 1 + \frac{1}{9} \right) \left( 1 - \frac{\ell}{n} \right) \sqrt{b_i \left( 1 + \ln d \right)} \geq \ln d + \frac{n - \ell}{n} \sqrt{b_i}
\enspace,
\]
where we used $\left( 1 - \frac{\ell}{n} \right) \sqrt{b_i} \geq 9 \sqrt{\ln d}$ in the last step.

Putting the pieces together, we get that
\begin{align*}
& \Pr{\sum_{\ell' < \ell} X_{\ell'} > b_i - 1 \growingmid \mathcal{E}_S } \leq \Pr{\sum_{\ell' < \ell} X_{\ell'} > (1+\delta)(1 - \delta) (b_i - 1) \growingmid \mathcal{E}_S }\\
& \leq \exp\left( - \frac{\delta^2}{3} (1 - \delta) (b_i - 1) \right) 
\leq \exp\left( - \ln d - \frac{n - \ell}{n} \sqrt{b_i} \right) = \frac{1}{d} \exp\left( - \frac{n - \ell}{n} \sqrt{b_i} \right) \enspace.
\end{align*}
\end{proof}

Based on this lemma, we can now bound the expected value of the allocation $y$ computed by the algorithm. Again, it will be crucial to carefully keep track of dependencies.

\begin{theorem}\label{theorem:high-capacity}
When every column of the linear program has at most $d$ non-zero entries then Algorithm~\ref{alg:packing_algorithm} achieves a $1 - O\left( \sqrt{\frac{\log d}{b_{\min}}}\right)$-approximation even with respect to the optimal fractional solution.
\end{theorem}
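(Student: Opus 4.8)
The plan is to bound the expected contribution of each round $\ell$ to the objective and then sum over all rounds. The key structural fact is that, conditioned on the \emph{set} $S_\ell$ of the first $\ell$ requests, the locally optimal solution $\tilde{x}^{(\ell)}$ is completely determined, while the request $j$ arriving in round $\ell$ is uniformly distributed over $S_\ell$. Hence, writing the expected value of the randomly rounded tentative allocation conditioned on $S_\ell$ as an average over $j \in S_\ell$, the rounding contributes in expectation $\frac{1}{\ell} c^T \tilde{x}^{(\ell)}$, since $\sum_k c_{j,k}\tilde{x}^{(\ell)}_{j,k}$ summed over $j \in S_\ell$ equals $c^T\tilde{x}^{(\ell)}$.

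I would next observe that a tentative allocation in round $\ell$ is made permanent whenever it does not violate a constraint, and that a sufficient condition is $(\sum_{\ell' < \ell} A x^{(\ell')})_i \leq b_i - 1$ for every $i \in C_{j,k^{(\ell)}}$, using $a_{i,j,k} \leq 1$ and $(Ay)_i \leq (\sum_{\ell'<\ell}Ax^{(\ell')})_i$. The crucial decoupling comes from conditioning on $S_\ell$ together with the identity of $j$: this fixes the set $S_{\ell-1} = S_\ell \setminus \{j\}$ of the first $\ell-1$ requests, i.e.\ the event $\mathcal{E}_{S_{\ell-1}}$, while leaving the random order within $S_{\ell-1}$ and the rounding of $j$ untouched. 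Since $\sum_{\ell'<\ell}Ax^{(\ell')}$ depends only on the first $\ell-1$ rounds, Lemma~\ref{lemma:allocation_probability} applies, and a union bound over the at most $d$ constraints in $C_{j,k^{(\ell)}}$ yields a feasibility probability of at least $1 - \exp(-\frac{n-\ell}{n}\sqrt{b_{\min}})$, uniformly in $j$ and in the rounding outcome. Because this bound is uniform, it factors out of the average over $j$, giving
\[
\Ex{c^T x^{(\ell)} \cdot \mathbf{1}[\text{allocated}]} \geq \left(1 - \exp\Big(-\tfrac{n-\ell}{n}\sqrt{b_{\min}}\Big)\right)\frac{1}{\ell}\,\Ex{c^T \tilde{x}^{(\ell)}} \enspace,
\]
and invoking Lemma~\ref{lemma:local_solution_value} replaces $\Ex{c^T\tilde{x}^{(\ell)}}$ by $(1 - 9\sqrt{(1+\ln d)/(\tfrac{\ell}{n}b_{\min})})\tfrac{\ell}{n}\OPT$.

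Finally I would sum these per-round bounds, restricting to the range $9\sqrt{(1+\ln d)/b_{\min}}\,n \leq \ell \leq (1 - 9\sqrt{(1+\ln d)/b_{\min}})n$ where both lemmas apply and dropping the non-negative contributions outside it. Writing $\beta = \sqrt{(1+\ln d)/b_{\min}}$ and $t = \ell/n$, the $\tfrac{1}{\ell}$ and $\tfrac{\ell}{n}$ factors cancel to a factor $\tfrac{1}{n}\OPT$ per round, so the total is essentially a Riemann sum for $\OPT\int_{9\beta}^{1-9\beta}(1 - e^{-(1-t)\sqrt{b_{\min}}})(1 - 9\beta/\sqrt{t})\,dt$ (the sum-to-integral gap is $O(1/n)$ as each summand is $O(\tfrac{1}{n})\OPT$). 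Using $(1-A)(1-B)\geq 1-A-B$, the error terms integrate to $O(\beta)$: the dropped mass at the two ends is $18\beta$, the term $\int 9\beta/\sqrt{t}\,dt \leq 18\beta$, and $\int e^{-(1-t)\sqrt{b_{\min}}}\,dt \leq 1/\sqrt{b_{\min}} \leq \beta$. This yields $\Ex{\ALG} \geq (1 - O(\beta))\OPT = (1 - O(\sqrt{(\log d)/b_{\min}}))\OPT$.

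The main obstacle is the dependency management in the middle step: the per-round value and the feasibility event are both random and a priori correlated through the shared randomness of the arrival order and the rounding. The decoupling works only because conditioning on $(S_\ell, j)$ simultaneously fixes $\mathcal{E}_{S_{\ell-1}}$ (making Lemma~\ref{lemma:allocation_probability} applicable) and renders the feasibility lower bound independent of $j$ and of the round-$\ell$ rounding (so that it pulls out of the average). Getting this conditioning exactly right---and checking that the round-$\ell$ randomness is genuinely independent of the first $\ell-1$ rounds once the set $S_{\ell-1}$ is fixed---is the delicate part; the concluding summation is then a routine integral estimate.
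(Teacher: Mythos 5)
Your proposal is correct and takes essentially the same approach as the paper's proof of Theorem~\ref{theorem:high-capacity}: the same three-stage conditioning (the set $S_\ell$ of the first $\ell$ requests, the identity of the round-$\ell$ request, then the internal order of the first $\ell-1$ requests) to decouple the value bound from Lemma~\ref{lemma:local_solution_value} and the feasibility bound from Lemma~\ref{lemma:allocation_probability}, combined via a union bound over the at most $d$ constraints hit by the tentative column. The only cosmetic difference is that you estimate the final sum by an integral while the paper bounds the two error sums directly by multiples of $p = 9\sqrt{(1+\ln d)/b_{\min}}$; both yield $\left(1 - O\left(\sqrt{(\log d)/b_{\min}}\right)\right)\OPT$.
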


\begin{proof}
We bound the expected value that the algorithm obtains by summing over all iterations. Let us first consider a fixed iteration $\ell$ with $pn \leq \ell \leq (1-p)n$ for $p = 9 \sqrt{\nicefrac{(1 + \ln d)}{b_{\min}}}$. Let $y^{(\ell)}$ be the change of the allocation in this round. To bound the expected improvement of the objective function $c^T y^{(\ell)}$, it will be helpful to think of the random order with respect to rounds $1, \ldots, \ell$ being determined in three steps:
\begin{enumerate}
\item[(i)] First, it is determined, which requests come within the first $\ell$ rounds (but not their order).
\item[(ii)] Second, one of these requests is selected to come in round $\ell$.
\item[(iii)] Finally, the order among the first $\ell - 1$ requests is determined.
\end{enumerate}
Observe that after (i) the LP solution $\tilde{x}^{(\ell)}$ is already determined. After (ii), $x^{(\ell)}$ is fixed as well. Step (iii) finally determines whether the allocation can actually be carried out, that is, if $y^{(\ell)} = x^{(\ell)}$.

By Lemma~\ref{lemma:local_solution_value}, we know that $\Ex{c^T \tilde{x}^{(\ell)}} \geq \frac{\ell}{n}  \left(1 - 9\sqrt{\frac{1 + \ln d}{\frac{\ell}{n} b_{\min}}} \right) \OPT$.

Step (ii) can actually be considered as selecting one of the first $\ell$ requests uniformly at random. Therefore we have $\Ex{c^{T}x^{(\ell)}} \geq \frac{1}{\ell}\cdot \Ex{c^{T}\tilde{x}^{(\ell)}}$.

In Lemma~\ref{lemma:allocation_probability}, we have shown that, independent of the outcomes of steps (i) and (ii), any constraint $i$ has remaining capacity less than one with probability at most $\frac{1}{d} \exp\left( - \frac{n - \ell}{n} \sqrt{b_{\min}} \right)$. Taking a union bound over all constraints having a non-zero entry, we observe that the probability that the allocation can be carried out is at least $1 - \exp\left( - \frac{n - \ell}{n} \sqrt{b_{\min}} \right)$. Formally, this means
\begin{align*}
\Ex{c^T y^{(\ell)}} &\geq \left( 1 - \exp\left( - \frac{n - \ell}{n} \sqrt{b_{\min}} \right) \right) \Ex{c^T x^{(\ell)}} \\
&\geq \left( 1 - \exp\left( - \frac{n - \ell}{n} \sqrt{b_{\min}} \right) \right) \frac{1}{n}  \left(1 - 9\sqrt{\frac{1 + \ln d}{\frac{\ell}{n} b_{\min}}} \right) \OPT  \enspace.
\end{align*}
Summing up all rounds $\ell$ and simplifying the expression, we get
\begin{align*}
\Ex{\ALG} &\geq \sum_{\ell = pn}^{(1-p)n} \frac{1}{n}  \left( 1 - \exp\left( - \frac{n - \ell}{n} \sqrt{b_{\min}}\right) \right) \left(1 - 9\sqrt{\frac{1 + \ln d}{\frac{\ell}{n} b_{\min}}} \right) \OPT \\
&\geq \left( 1 - 2p - \sum_{\ell = pn}^{(1-p)n} \frac{1}{n} \exp\left( - \frac{n - \ell}{n} \sqrt{b_{\min}}\right) - \sum_{\ell = pn}^{(1-p)n} \frac{1}{n} 9\sqrt{\frac{1+ \ln d}{\frac{\ell}{n} b_{\min}}} \right) \OPT \enspace.
\end{align*}
We will analyze both negative sums separately and bound them by multiples of $p$.

For the first sum we reverse the order of summation and extend the sum to a geometric series. Also using the inequality $1 - \exp(-x) \geq (1 - \frac{1}{\e}) x$ for all $0 \leq x \leq 1$ we obtain
\[
\frac{1}{n} \sum_{\ell = pn}^{(1-p)n} \exp \left( - \frac{n - \ell}{n}  \sqrt{b_{\min}} \right) 
\leq \frac{1}{n} \sum_{i = 0}^{\infty} \exp \left( - \frac{i}{n}  \sqrt{b_{\min}} \right)
\leq \frac{1}{n \left( 1 - \exp \left( - \frac{\sqrt{b_{\min}}}{n} \right) \right)}  \leq \frac{1}{\left( 1 - \frac{1}{\e} \right)  \sqrt{b_{\min}}} \leq p\enspace .
\]

For the second sum we have
\[
\sum_{\ell = pn}^{(1-p)n} \frac{1}{n} 9\sqrt{\frac{1 + \ln d}{\frac{\ell}{n} b_{\min}}} \leq 9\sqrt{\frac{1 + \ln d}{n b_{\min}}} \sum_{\ell = 1}^n \frac{1}{\sqrt{\ell}} \leq 18 \sqrt{\frac{1 + \ln d}{b_{\min}}} \leq 2p \enspace.
\]

As a result we get
\[
\Ex{\ALG} \geq (1 - 5p)\OPT = \left(1-O\left(\sqrt{\frac{1 + \log d}{b_{\min}}}\right)\right)\OPT \enspace.
\]
\end{proof}

\subsection{Small Capacities}
\label{sec:small-b}
% !TeX root = onlinelp.tex

For small $b_{\min}$ there is no $\epsilon < 1$ so that the algorithm is $(1-\epsilon)$-competitive.
Fortunately we can still give a non-trivial bound on the performance of Algorithm~\ref{alg:packing_algorithm} for $b_{\min} \geq 2$ with a proof that is analogous to the previous section.
Again we start the analysis with a bound on the expected value of the scaled LP. 
In step $\ell$ a fraction of $\frac{\ell}{n}$ columns has already arrived and the capacities are scaled down by $\frac{\ell}{n}$ leading directly to the observation.
\begin{observation} \label{obs:simple_local_value}
Let $S \subseteq [n]$ be a random subset of requests with $\lvert S \rvert = \ell$. 
Then we have
\[
\Ex{\max_{x \in \mathcal{P}(\frac{\ell}{n}, S)} c^T x} 
\geq \left(\frac{\ell}{n}\right)^2 \cdot \max_{x \in \mathcal{P}(1, [n])} c^T x \enspace .
\]
\end{observation}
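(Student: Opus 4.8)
The plan is to exhibit, for every realization of the random set $S$, a single feasible point of $\mathcal{P}(\frac{\ell}{n}, S)$ whose expected value already meets the claimed bound; since the maximum dominates any particular feasible point, this suffices. Unlike Lemma~\ref{lemma:local_solution_value}, where one wants a constant close to $1$ and therefore pays for a per-constraint scaling analysis via a Chernoff bound, here I am content to lose a full factor of $\frac{\ell}{n}$ by scaling down uniformly, which makes feasibility immediate and avoids all concentration arguments.

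First I would fix an optimal solution $x^\ast$ of the full unscaled program $\max_{x \in \mathcal{P}(1, [n])} c^T x$, so that $c^T x^\ast = \OPT$. I then project it onto the revealed requests by setting $x'_{j,k} = x^\ast_{j,k}$ for $j \in S$ and $x'_{j,k} = 0$ otherwise, and finally scale down by the sampling fraction, defining $x'' := \frac{\ell}{n} x'$. The candidate I will plug into the maximum is $x''$.

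Next I would verify that $x'' \in \mathcal{P}(\frac{\ell}{n}, S)$. For the capacity constraints, restricting to $S$ can only decrease each row sum, so $(Ax')_i \leq (Ax^\ast)_i \leq b_i$, and hence $(Ax'')_i = \frac{\ell}{n}(Ax')_i \leq \frac{\ell}{n} b_i$, exactly the scaled capacity. For the per-request constraints, $\sum_{k} x''_{j,k} = \frac{\ell}{n}\sum_{k} x'_{j,k} \leq \frac{\ell}{n} \leq 1$, and $x''_{j,k}=0$ whenever $j \notin S$; both requirements of $\mathcal{P}(\frac{\ell}{n}, S)$ hold. This feasibility check is the only place any verification is needed, and it is where the uniform scale-down by $\frac{\ell}{n}\le 1$ does all the work.

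Finally I would take expectations over the uniformly random $\ell$-subset $S$. Since each request $j$ lies in $S$ with probability $\frac{\ell}{n}$, I get $\Ex{c^T x''} = \frac{\ell}{n}\sum_{j,k} c_{j,k} x^\ast_{j,k}\Pr{j \in S} = \left(\frac{\ell}{n}\right)^2 \OPT$. Because $\max_{x \in \mathcal{P}(\frac{\ell}{n}, S)} c^T x \geq c^T x''$ holds pointwise, monotonicity of expectation yields the stated inequality. I do not expect any genuine obstacle: the content is entirely in choosing the right deterministic witness, and the only subtlety worth stating explicitly is that restriction lowers the left-hand sides of the capacity constraints while the factor $\frac{\ell}{n}$ simultaneously rescales their right-hand sides, so no randomized rounding or tail bound is required.
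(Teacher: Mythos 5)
Your proof is correct and matches the paper's intended argument exactly: the paper treats the observation as immediate because projecting the optimum $x^\ast$ onto $S$ costs one factor of $\frac{\ell}{n}$ in expectation and scaling the projected solution by $\frac{\ell}{n}$ restores feasibility against the scaled capacities, yielding the $\left(\frac{\ell}{n}\right)^2$ bound. Your explicit witness $x'' = \frac{\ell}{n}x'$ and the feasibility check are precisely the details the paper leaves implicit.
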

Next we bound the failure probability for an allocation and apply a union bound over all relevant online steps.

\begin{theorem}
\label{theorem:small-capacity}
When every column of the linear program has at most $d$ non-zero entries and we have $b_{\min} \geq 2$, then Algorithm~\ref{alg:packing_algorithm} is $\Omega\left(\frac{1}{d^{\nicefrac{2}{(b_{\min} - 1)}}}\right)$-competitive.
\end{theorem}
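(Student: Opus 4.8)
The plan is to run the same three-stage analysis as in Theorem~\ref{theorem:high-capacity}, but to feed it the weaker value bound of Observation~\ref{obs:simple_local_value} and a feasibility bound that survives even when $b_{\min}$ is only a constant. As before I would think of the random order on rounds $1,\dots,\ell$ as generated in stages: (i) the set of the first $\ell$ requests, (ii) which of them arrives in round $\ell$, and (iii) the order of the first $\ell-1$ requests. Stage (i) fixes $\tilde{x}^{(\ell)}$, and Observation~\ref{obs:simple_local_value} gives $\Ex{c^{T}\tilde{x}^{(\ell)}} \geq (\ell/n)^2\,\OPT$. Since stage (ii) picks the current request uniformly among the $\ell$ present, $\Ex{c^{T}x^{(\ell)}} \geq \tfrac{1}{\ell}\Ex{c^{T}\tilde{x}^{(\ell)}} \geq \tfrac{\ell}{n^2}\,\OPT$.

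Next I would establish the feasibility bound, reusing the structural part of Lemma~\ref{lemma:allocation_probability}: its $1$-correlation argument is insensitive to the size of $b_i$, so conditioned on $\mathcal{E}_S$ the variables $X_{\ell'}=(Ax^{(\ell')})_i\in[0,1]$ remain $1$-correlated with Bernoulli twins of mean $b_i/n$, giving $\mu:=\Ex{\sum_{\ell'<\ell}X_{\ell'}\mid \mathcal{E}_S}\le \tfrac{\ell}{n}b_i$. Where the high-capacity proof used a sub-Gaussian tail, I would instead feed these $1$-correlated variables into the \emph{multiplicative} upper-tail Chernoff bound, obtaining $\Pr{\sum_{\ell'<\ell}X_{\ell'}\ge b_i-1 \mid \mathcal{E}_S}\le \left(\tfrac{e\mu}{b_i-1}\right)^{b_i-1}\le \left(\tfrac{e(\ell/n)b_i}{b_i-1}\right)^{b_i-1}$. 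The clean algebraic fact that makes the union bound work is that for every touched constraint $b_i\ge b_{\min}$ one has $\tfrac{b_{\min}-1}{b_{\min}}\cdot\tfrac{b_i}{b_i-1}\le 1$; hence, choosing the threshold $\rho:=\tfrac{b_{\min}-1}{e\,b_{\min}}(2d)^{-1/(b_{\min}-1)}$ and taking $\ell\le \rho n$ forces $\tfrac{e(\ell/n)b_i}{b_i-1}\le (2d)^{-1/(b_{\min}-1)}$, so each factor is at most $(2d)^{-(b_i-1)/(b_{\min}-1)}\le \tfrac{1}{2d}$. A union bound over the at most $d$ constraints a tentative allocation can touch then bounds the infeasibility probability by $\tfrac12$, independently of stages (i) and (ii).

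Finally I would combine the two estimates but sum only over the early rounds $1\le \ell\le \rho n$, where each allocation survives with probability at least $\tfrac12$:
\[
\Ex{\ALG}\ \ge\ \sum_{\ell=1}^{\lfloor \rho n\rfloor}\tfrac12\cdot\tfrac{\ell}{n^2}\,\OPT\ \ge\ \tfrac{\rho^2}{4}\,\OPT .
\]
Since $\big(\tfrac{b_{\min}-1}{b_{\min}}\big)^2\ge \tfrac14$ and $2^{-2/(b_{\min}-1)}\ge \tfrac14$ for $b_{\min}\ge 2$, we get $\rho^2=\Omega\!\big(d^{-2/(b_{\min}-1)}\big)$, which is exactly the claimed competitive ratio.

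The main obstacle is the feasibility step in this regime: with $b_{\min}$ constant there is no concentration, so unlike the high-capacity case I cannot make the tentative allocation succeed with probability near $1$. The right response is to settle for a constant survival probability and pay for it by restricting to a short prefix of rounds. This is where the exponent $2/(b_{\min}-1)$ (rather than the $1/(b_{\min}-1)$ obtainable when $B$ and $d$ are known) comes from: one factor $\rho\sim d^{-1/(b_{\min}-1)}$ is the threshold on $\ell/n$ needed to keep the failure probability below $\tfrac12$, and the second factor arises from averaging the linearly growing per-round value $\tfrac{\ell}{n^2}\OPT$ over $[1,\rho n]$. The one genuinely technical point to check is that the multiplicative Chernoff bound is legitimate for $1$-correlated (not independent) variables, which follows by expanding the moment generating function and bounding it term-by-term through the $1$-correlation inequality established in Lemma~\ref{lemma:allocation_probability}.
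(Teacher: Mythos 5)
Your proposal is correct and takes essentially the same route as the paper's proof: the value bound from Observation~\ref{obs:simple_local_value}, the $1$-correlation argument of Lemma~\ref{lemma:allocation_probability} used to justify a multiplicative Chernoff bound giving failure probability at most $\frac{1}{2d}$ per constraint, a union bound over the at most $d$ touched constraints to get success probability $\frac{1}{2}$, and a restriction to rounds $\ell = O\bigl(n \cdot d^{-1/(b_{\min}-1)}\bigr)$, with the squared exponent arising exactly as you describe. The only immaterial difference is bookkeeping: you sum the linearly growing per-round value $\frac{\ell}{n^2}\OPT$ over the whole prefix $[1, \rho n]$, while the paper sums a flat bound $\frac{1}{8\e\psi}\frac{\OPT}{n}$ over the window $\bigl[\frac{n}{8\e\psi}, \frac{n}{4\e\psi}\bigr]$ with $\psi = d^{1/(b_{\min}-1)}$; both give $\Omega\bigl(d^{-2/(b_{\min}-1)}\bigr)$.
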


\begin{proof}
Let $\OPT$ be the value of the global optimum and set $\psi = d^{\frac{1}{b_{\min} - 1}}$.
By Observation~\ref{obs:simple_local_value} we have for all rounds $\ell \geq \frac{n}{8\e\psi}$ that
\[
\Ex{c^T x^{(\ell)}} \geq \frac{1}{\ell} \left(\frac{\ell}{n}\right)^2 \OPT \geq \frac{1}{8\e\psi} \frac{\OPT}{n} \enspace.
\]
Let us consider the tentative allocation $x^{(\ell)}$ in any fixed round $\ell \leq \frac{n}{4\e\psi}$, where request $j$ arrives.
This assignment can be carried out if $(\sum_{\ell' < \ell} A x^{(\ell')})_i \leq b_i - 1$ for all $i \in [m]$ with $a_{i,j} > 0$.
We have for all $i \in [m]$
\[
\Ex{\left(\sum_{\ell' < \ell} A x^{(\ell')}\right)_i} \leq \sum_{\ell' = 1}^{\ell - 1} \frac{1}{\ell'} \frac{\ell'}{n} b_i = \frac{\ell - 1}{n} b_i \leq \frac{b_i}{4\e\psi} \enspace.
\]
Define $\delta = 4\e\psi \left(1 -  \frac{1}{b_i} \right)- 1$. We have $(1 + \delta) \frac{1}{4\e\psi} b_i = b_i - 1$. By the same reasons as discussed in the proof of Lemma~\ref{lemma:allocation_probability}, we can apply a Chernoff bound and get
\begin{align*}
\Pr{\left(\sum_{\ell' < \ell} A x^{(\ell')}\right)_i \geq b_i - 1} & = \Pr{\left(\sum_{\ell' < \ell} A x^{(\ell')}\right)_i \geq (1 + \delta) \frac{b_i}{4\e\psi}} \\
& \leq \left( \frac{\e^\delta}{(1 + \delta)^{1 + \delta}} \right)^{\frac{b_i}{4\e\psi}} 
\leq \left( \frac{\e}{1 + \delta} \right)^{(1 + \delta) \frac{b_i}{4\e\psi}} \enspace .
\end{align*}
Furthermore, since $b_i \geq 2$, we have $1 + \delta \geq \frac{1}{2} 4\e\psi = 2 \e \psi$ and therefore 
\[
\left( \frac{\e}{1 + \delta} \right)^{(1 + \delta) \frac{b_i}{4\e\psi}} 
\leq \left( \frac{1}{2\psi}  \right)^{b_i - 1} 
\leq \frac{1}{2d} \enspace .
\]
Using a union bound, we observe that this tentative assignment $x^{(\ell)}$ is carried out with probability at least $\frac{1}{2}$. That is, we get
\[
\Ex{\ALG} \geq \sum_{\ell = \frac{n}{8\e\psi}}^{\frac{n}{4\e\psi}} \frac{1}{2} \Ex{c^T x^{(\ell)}} 
\geq \left( \frac{n}{4\e\psi} - \frac{n}{8\e\psi} \right) \frac{1}{2} \frac{1}{8\e\psi} \frac{\OPT}{n} 
= \Omega\left(\frac{1}{d^{\nicefrac{2}{(b_{\min} - 1)}}}\right) \OPT \enspace.
\]
\end{proof}

\section{Extensions and Variants}
\subsection{Truthfulness}
% !TeX root = onlinelp.tex

A common scenario for online packing LPs are auctions, in which bidders arrive one by one and need to be served immediately. This means, upon arrival each bidder reports her options and respective valuations.
Based on this and previously gathered information, the allocation for this specific bidder and a payment are determined.
While the previously presented algorithm already solves the underlying optimization problem, in an auction setting bidders might strategically misreport.
This strategic behavior can undermine the performance guarantee.
An algorithm is robust against this type of manipulation if it is in each bidder's best interest to report the truth, the algorithm is said to be \emph{truthful}.

In Algorithm~\ref{alg:packing_algorithm}, if a resource has not enough capacity left to fully allocate an option of the current bidder, she is potentially better off by not reporting any valuation for this option. Algorithm~\ref{alg:truthful} deals with this issue by discarding all options that are not feasibly satisfiable. Afterwards it continues in the same way as Algorithm~\ref{alg:packing_algorithm} and applies VCG payments. We observe that this mechanism is truthful in expectation, even when bidders know the random order in advance. This is due to the fact that the \emph{supporting mechanism} is truthful because it always computes the social-welfare optimum and applies VCG; for details see~\cite{DBLP:journals/jacm/LaviS11}.

\begin{algorithm}
\caption{Truthful online packing LP\label{alg:truthful}}
\SetKwInOut{Input}{Input} \SetKwInOut{Output}{Output}
%\Input{capacity vector $b$ and number of requests $n$}
%\Output{integral allocation $y$}
%\BlankLine
Let $S$ be the index set of known requests, initially $S := \emptyset$\;
Set $y := 0$\;
\For(\tcp*[f]{steps $\ell = 1$ to $n$}){\emph{each arriving request} $j$}{
  Set $S := S \cup \{j\}$ and $\ell := |S|$\;  
  %\ForEach{$(j,k)$ such that there is an $i \in [m]$ with $(Ay)_i > b_i - a_{i,j,k}$}{
  \For{each option $k$ such that $\exists i \in [m]$ with $(Ay)_i + a_{i,j,k} > b_i$}{
	Set $c_{j,k} = 0$; \tcp*[f]{remove infeasible options}\\
  }
  Let $\tilde{x}^{(\ell)}$ be an optimal solution of the scaled LP $\max_{x \in \mathcal{P}(\frac{\ell}{n}, S)} c^T x$\;
  Charge payment $p_j = \max_{x \in \mathcal{P}(\frac{\ell}{n}, S \setminus\{j\})} c^T x - \sum\limits_{j' \in S\setminus \{ j \}, k\in[K]} c_{j', k} \cdot \tilde{x}^{(\ell)}_{j',k}$ \tcp*[j]{VCG payment}
  Choose an option $k^{(\ell)}$ (possibly none) where option $k$ has probability $\tilde{x}^{(\ell)}_{j, k}$; \tcp*[f]{rand.\ rounding}\\
  Define $x^{(\ell)}$ with $x^{(\ell)}_{j', k} = \begin{cases} 1 , & \text{if } j' = j \text{ and } k = k^{(\ell)};  \\
                                                               0 , & \text{otherwise} ;\end{cases}$ \\
    Set $y := y+x^{(\ell)}$; \tcp*[f]{permanent online allocation}\\
  %}
}
\end{algorithm}

\begin{theorem}
Given truthful reports, the mechanism approximates optimal social welfare within a factor of $1 - O\left( \sqrt{\frac{\log m}{b_{\min}}}\right)$.
\end{theorem}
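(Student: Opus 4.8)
The plan is to follow the three-step analysis of Theorem~\ref{theorem:high-capacity} almost verbatim, adapting it to the two changes introduced by Algorithm~\ref{alg:truthful}: options that would immediately violate a constraint are zeroed out before the LP is solved, and every tentative allocation is carried out unconditionally (there is no final feasibility test). Since truthful bidders report their true valuations, the realized social welfare equals $c^T y$ for the allocation $y$ produced by Algorithm~\ref{alg:truthful}, so it suffices to lower bound $\Ex{c^T y} = \sum_\ell \Ex{c^T x^{(\ell)}}$. Because infeasible options are removed, we always have $Ay \le b$ and every $x^{(\ell)}$ is added to $y$, so the only way we can lose value relative to Algorithm~\ref{alg:packing_algorithm} is through the zeroing of options when solving the scaled LP.

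First I would reduce the per-round value bound to the ``full'' LP. Fix a round $\ell$ in the central range $pn \le \ell \le (1-p)n$ with $p = O\big(\sqrt{\nicefrac{(\log m)}{b_{\min}}}\big)$. Let $\tilde{x}^{(\ell)}$ be the optimum of the reduced scaled LP (infeasible options zeroed) and let $\tilde{x}^{(\ell)}_{\mathrm{full}}$ be the optimum of the scaled LP over \emph{all} options. An option of a request in $S$ is zeroed only if some constraint $i$ it touches already satisfies $(Ay)_i > b_i - 1$; hence, conditioned on the event that no constraint lies in this danger zone, the two LPs coincide and $c^T \tilde{x}^{(\ell)} = c^T \tilde{x}^{(\ell)}_{\mathrm{full}}$. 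As in Theorem~\ref{theorem:high-capacity}, I would expose the randomness in three stages --- (i) the set $S$ of the first $\ell$ requests, (ii) the request placed in round $\ell$, (iii) the order of the first $\ell-1$ requests --- and note that $\tilde{x}^{(\ell)}_{\mathrm{full}}$ is already determined after (i), while the danger-zone event depends only on the allocations of rounds $\ell' < \ell$. Conditioning on $S$ and using $c^T\tilde{x}^{(\ell)} \ge 0$ when some constraint is full, this gives $\Ex{c^T \tilde{x}^{(\ell)}} \ge \Pr{\text{no constraint in danger zone}} \cdot \Ex{c^T \tilde{x}^{(\ell)}_{\mathrm{full}}}$, after which Lemma~\ref{lemma:local_solution_value} (with the trivial bound $d \le m$) together with the uniform-selection identity $\Ex{c^T x^{(\ell)}} = \tfrac{1}{\ell}\Ex{c^T \tilde{x}^{(\ell)}}$ finishes the per-round estimate.

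The heart of the argument is bounding the danger-zone probability, and here I would reuse Lemma~\ref{lemma:allocation_probability} essentially unchanged: the variables $X_{\ell'} = (A x^{(\ell')})_i$ still lie in $[0,1]$, and each reduced optimum remains feasible for the scaled LP, $(A \tilde{x}^{(\ell')})_i \le \tfrac{\ell'}{n} b_i$, so the $1$-correlation computation and the conditioning-on-later-rounds trick carry over word for word. The one genuine change is that a zeroed option may touch \emph{any} of the $m$ constraints, so I must guarantee that \emph{no} constraint is full rather than only the $d$ constraints of a single column. Consequently the Chernoff estimate must absorb a $\ln m$ term, i.e.\ one carries $\ln m$ in place of $\ln d$ through the bound of Lemma~\ref{lemma:allocation_probability} --- this is exactly what forces the hypothesis $b_{\min} = \Omega(\nicefrac{\log m}{\epsilon^2})$ and replaces $\log d$ by $\log m$ in the final ratio --- yielding a per-constraint bound of the form $\tfrac{1}{m}\exp\!\big(-\tfrac{n-\ell}{n}\sqrt{b_{\min}}\big)$. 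A union bound over all $m$ constraints then gives $\Pr{\text{some constraint in danger zone}} \le \exp\!\big(-\tfrac{n-\ell}{n}\sqrt{b_{\min}}\big)$.

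Finally I would assemble the sum exactly as in Theorem~\ref{theorem:high-capacity}: combining the two estimates yields, for each central round, $\Ex{c^T x^{(\ell)}} \ge \big(1 - \exp(-\tfrac{n-\ell}{n}\sqrt{b_{\min}})\big)\big(1 - 9\sqrt{\tfrac{1+\ln m}{\nicefrac{\ell}{n}\,b_{\min}}}\big)\tfrac{1}{n}\OPT$, and summing over $pn \le \ell \le (1-p)n$ bounds the two error sums by multiples of $p$ (a geometric-series bound for the feasibility term and the $\sum_\ell 1/\sqrt{\ell}$ bound for the local-value term), so that $\Ex{\ALG} \ge (1 - O(p))\OPT = \big(1 - O(\sqrt{\nicefrac{(\log m)}{b_{\min}}})\big)\OPT$. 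The main obstacle is the dependency bookkeeping in the third paragraph: one must check that the reduced-LP allocations of earlier rounds still satisfy the boundedness and scaled-feasibility properties that make the $1$-correlation argument of Lemma~\ref{lemma:allocation_probability} valid, and that conditioning on $S$ (so that $\tilde{x}^{(\ell)}_{\mathrm{full}}$ is fixed) does not bias the danger-zone event. This is precisely the point at which the column-sparsity refinement is lost and the passage from $d$ to $m$ becomes unavoidable, since the re-solved global LP may draw value from every one of the $m$ constraints.
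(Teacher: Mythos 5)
Your plan coincides with the paper's proof in outline---replace $\ln d$ by $\ln m$ since a zeroed option may touch any of the $m$ constraints, obtain a per-constraint bound of $\frac{1}{m}\exp\left(-\frac{n-\ell}{n}\sqrt{b_{\min}}\right)$, union bound over all constraints, and reuse the summation from Theorem~\ref{theorem:high-capacity}---but there is one genuine gap in your third paragraph. You claim the conditioning trick of Lemma~\ref{lemma:allocation_probability} carries over ``word for word'' to the truthful algorithm's \emph{own} allocations $x^{(\ell')}$. It does not: the crucial step in that lemma's proof is that any two orders differing only in positions $1,\ldots,\ell'_1$ yield the same LP solutions in all rounds $\ell' > \ell'_1$, and for Algorithm~\ref{alg:truthful} this order-insensitivity fails, because the zeroing of options in round $\ell'$ depends on the running allocation $y$, which in turn depends on the order of the first $\ell'_1$ requests and on the rounding outcomes in those rounds. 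The properties you verify---$X_{\ell'} \in [0,1]$ and scaled feasibility $(A\tilde{x}^{(\ell')})_i \le \frac{\ell'}{n} b_i$---do hold, but they are not the broken ingredient; the exchangeability underlying the $1$-correlation computation is.

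The paper closes exactly this hole by a coupling that you half-articulate (``conditioned on no constraint in the danger zone, the two LPs coincide'') but do not deploy where it matters. It runs Algorithm~\ref{alg:packing_algorithm} on the same random order and rounding randomness, writes $z^{(\ell)}$ for its tentative allocation, and defines the event $\mathcal{F}$ that $\left(\sum_{\ell' < \ell} A z^{(\ell')}\right)_i \le b_i - 1$ for all $i$ in terms of \emph{that} process, to which Lemma~\ref{lemma:allocation_probability} legitimately applies. Under $\mathcal{F}$ no option is ever zeroed through round $\ell$, so the two runs coincide, and the value bound transfers via
\[
\Pr{c^T x^{(\ell)} \geq a} \geq \Pr{\mathcal{F}} \Pr{c^T z^{(\ell)} \geq a \growingmid \mathcal{F}} = \Pr{c^T z^{(\ell)} \geq a} \Pr{\mathcal{F} \growingmid c^T z^{(\ell)} \geq a} \enspace ,
\]
which also repairs the correlation issue in your per-round inequality $\Ex{c^T \tilde{x}^{(\ell)}} \geq \Pr{\text{no danger}} \cdot \Ex{c^T \tilde{x}^{(\ell)}_{\mathrm{full}}}$: the danger event and the full-LP value are not independent, and it is the conditional form of Lemma~\ref{lemma:allocation_probability} (valid conditioned on $\mathcal{E}_S$, hence on the round-$\ell$ value) that makes such a product legitimate. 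The fix within your framework is a single observation: define the danger event on the coupled Algorithm~\ref{alg:packing_algorithm} partial sums; since the two runs agree until the first zeroing and these sums are monotone in $\ell$, your danger event at round $\ell$ is contained in the complement of $\mathcal{F}$, after which your remaining steps go through unchanged.
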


\begin{proof}
Like in the proof of Theorem~\ref{theorem:high-capacity}, we consider the contribution to the objective function of a fixed iteration $\ell$ with $pn \leq \ell \leq (1-p)n$ for $p = 9 \sqrt{\nicefrac{(1 + \ln m)}{b_{\min}}}$. Let $\mathcal{F}$ be the event that $(\sum_{\ell' < \ell} A x^{(\ell')})_i \leq b_i - 1$ for all $i \in [m]$. Observe that as long as $\mathcal{F}$ holds the algorithm's behavior in round $\ell$ does not differ from the one of Algorithm~\ref{alg:packing_algorithm}. Let $z^{(\ell)}$ be the random variable indicating the tentative allocation in a run of Algorithm~\ref{alg:packing_algorithm}.

For all $a > 0$ we now have $\Pr{c^T x^{(\ell)} \geq a} \geq \Pr{\mathcal{F}} \Pr{c^T x^{(\ell)} \geq a \growingmid \mathcal{F}} = \Pr{\mathcal{F}} \Pr{c^T z^{(\ell)} \geq a \growingmid \mathcal{F}} = \Pr{c^T z^{(\ell)} \geq a} \Pr{\mathcal{F} \growingmid c^T z^{(\ell)} \geq a}$.

In the proof of Theorem~\ref{theorem:high-capacity}, we have shown that $\Ex{c^T z^{(\ell)}} \geq \frac{1}{n} \left(1 - 9\sqrt{\frac{1 + \ln m}{\frac{\ell}{n} b_{\min}}} \right) \OPT$. Furthermore, Lemma~\ref{lemma:allocation_probability} shows that $\Pr{\mathcal{F} \growingmid c^T z^{(\ell)} \geq a} \geq 1 - \exp\left( - \frac{n - \ell}{n} \sqrt{b_{\min}} \right)$.
In combination, we get $\Ex{c^T x^{(\ell)}} \geq \frac{\OPT}{n}  \left(1 - 9\sqrt{\frac{1 + \ln m}{\frac{\ell}{n} b_{\min}}} \right) \left(1 - \exp\left( - \frac{n - \ell}{n} \sqrt{b_{\min}} \right)\right)$.

The remaining calculations can be carried out exactly as in the proof of Theorem~\ref{theorem:high-capacity}.
\end{proof}
By an analogous adaptation of Theorem~\ref{theorem:small-capacity}, we can also show a performance bound of $\Omega\left(\frac{1}{m^{\nicefrac{2}{(b_{\min} - 1)}}}\right)$.

\subsection{Improved Bounds for Fixed Low Capacities}
% !TeX root = onlinelp.tex

We have shown performance guarantees for Algorithm~\ref{alg:packing_algorithm} for any $B \geq 2$. The algorithm has the advantage that it does not need to know $B$ or $d$ in advance. However, in case of small $B$, the allocation might be somewhat too optimistic and resources can be exhausted early. Therefore, if we know $B$ and $d$, we can achieve a better competitive ratio by adding a sampling phase at the start. In more detail, we run the same algorithm but do not make any allocation in the first $pn$ rounds, where $p = 1 - \frac{1}{2 \e} \left( \frac{1}{2d} \right)^{\frac{1}{B - 1}}$. 

\begin{lemma}
The modified algorithm is $\Omega\left(\frac{1}{d^{\nicefrac{1}{(B - 1)}}}\right)$-competitive.
\end{lemma}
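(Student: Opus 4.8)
The plan is to reuse the structure of the proof of Theorem~\ref{theorem:small-capacity}, but exploit the fact that the sampling phase defers all allocations until round $pn$, so that when we do start allocating, the scaled LP already has a large fraction $\frac{\ell}{n} \geq p$ of the capacity available, which lets us save a factor and improve the exponent from $\nicefrac{2}{(B-1)}$ to $\nicefrac{1}{(B-1)}$. First I would invoke Observation~\ref{obs:simple_local_value}: for any round $\ell$, the scaled optimum satisfies $\Ex{\max_{x \in \mathcal{P}(\frac{\ell}{n}, S)} c^T x} \geq \left(\frac{\ell}{n}\right)^2 \OPT$, and since step~(ii) selects the current request uniformly among the $\ell$ known requests, the tentative allocation contributes $\Ex{c^T x^{(\ell)}} \geq \frac{1}{\ell}\left(\frac{\ell}{n}\right)^2 \OPT = \frac{\ell}{n^2}\OPT$. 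For rounds $\ell \geq pn$ this is at least $\frac{p}{n}\OPT$, so the per-round gain is now a constant (in $d$) fraction of $\frac{\OPT}{n}$ rather than the $\frac{1}{\psi}$ fraction we had without sampling.

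Second, I would bound the failure probability of a tentative allocation in a fixed round $\ell$ with $pn \leq \ell \leq n$. Because no allocation happens before round $pn$, the consumption on constraint $i$ is a sum over rounds $pn \leq \ell' < \ell$ of variables $X_{\ell'} = (Ax^{(\ell')})_i \in [0,1]$; exactly as in Lemma~\ref{lemma:allocation_probability} these are $1$-correlated with twin expectation $\frac{b_i}{n}$, so $\Ex{\left(\sum_{\ell' < \ell} A x^{(\ell')}\right)_i} \leq \frac{\ell}{n} b_i \leq b_i$. The key point is the choice $p = 1 - \frac{1}{2\e}\left(\frac{1}{2d}\right)^{\frac{1}{B-1}}$: this makes $1 - p$ small, and I would set up the multiplicative Chernoff deviation with $(1+\delta)\cdot(\text{mean}) = b_i - 1$. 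Tracking the bound $\left(\frac{\e}{1+\delta}\right)^{(1+\delta)\cdot\text{mean}}$ as in Theorem~\ref{theorem:small-capacity}, the sampling-induced factor $\frac{1}{2\e}\left(\frac{1}{2d}\right)^{1/(B-1)}$ in $1-p$ forces $1+\delta$ to be of order $\e\psi$ with $\psi = d^{1/(B-1)}$, and the exponent $b_i - 1 \geq B-1$ turns $\left(\frac{1}{2\psi}\right)^{B-1}$ into $\frac{1}{2d}$, so by a union bound over the at most $d$ relevant constraints the allocation succeeds with probability at least $\frac{1}{2}$.

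Finally I would sum the guaranteed per-round contributions over the surviving window. Combining the per-round value $\frac{p}{n}\OPT$, the success probability $\frac{1}{2}$, and summing over the $(1-p)n$ rounds after the sampling phase gives $\Ex{\ALG} \geq (1-p)n \cdot \frac{1}{2} \cdot \frac{p}{n}\OPT = \frac{p(1-p)}{2}\OPT$. Since $p \to 1$, the factor $p$ is bounded below by a constant, and $1 - p = \frac{1}{2\e}\left(\frac{1}{2d}\right)^{1/(B-1)} = \Omega\big(d^{-1/(B-1)}\big)$, which yields the claimed $\Omega\big(d^{-1/(B-1)}\big)$-competitiveness.

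The main obstacle I anticipate is pinning down the constants so that the Chernoff exponent comes out exactly at $\left(\frac{1}{2\psi}\right)^{B-1} \leq \frac{1}{2d}$ while simultaneously keeping $1-p$ large enough to make the summed value $\Omega(d^{-1/(B-1)})\OPT$; in particular one must verify that $1+\delta$ is genuinely bounded below by $2\e\psi$ (the analogue of the $b_i \geq 2$ step in Theorem~\ref{theorem:small-capacity}) uniformly over all constraints $i$ with $b_i \geq B$, and that the tension between wanting $p$ close to $1$ (for a constant success probability and per-round value) and wanting $1-p$ not too small (for a large summation window) is resolved precisely by this choice of $p$.
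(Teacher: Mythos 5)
Your proposal is correct and takes essentially the same route as the paper's own proof: Observation~\ref{obs:simple_local_value} for the per-round value (the paper uses $\ell \geq pn$ and $p \geq \frac{1}{2}$ to get $\Ex{c^T x^{(\ell)}} \geq \frac{1}{2}\frac{\OPT}{n}$), a Chernoff bound on the post-sampling consumption with $\delta$ chosen so that $(1+\delta)(1-p)b_i = b_i - 1$, the tail estimate $\left(\frac{\e(1-p)}{1-\nicefrac{1}{b_i}}\right)^{b_i-1} \leq \left(2\e(1-p)\right)^{b_i-1} = \frac{1}{2d}$, a union bound giving success probability $\frac{1}{2}$, and a sum over the $(1-p)n$ rounds after the sampling phase. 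One slip to fix: your displayed bound $\Ex{\bigl(\sum_{\ell' < \ell} A x^{(\ell')}\bigr)_i} \leq \frac{\ell}{n} b_i \leq b_i$ counts $\ell$ terms, but, as you yourself note, the sum only runs over $\ell' > pn$, so the correct bound is $\frac{\ell - pn}{n} b_i \leq (1-p)b_i$; taken literally, a mean of $b_i$ would force $\delta < 0$ in your setup $(1+\delta)\cdot(\text{mean}) = b_i - 1$ and the claim $1+\delta = \Theta(\e\psi)$ would be unavailable. Since your subsequent Chernoff computation is only coherent with the mean $(1-p)b_i$ and everything else matches the paper (your final bound $\frac{p(1-p)}{2}\OPT$ versus the paper's $\frac{1-p}{4}\OPT$ is the same order), this is a typo rather than a genuine gap.
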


\begin{proof}
We can bound $\Ex{c^T x^{(\ell)}}$ again by using Observation~\ref{obs:simple_local_value}. Additionally, we can use $\ell \geq p n$ and $p \geq \frac{1}{2}$ to get
\[
\Ex{c^T x^{(\ell)}} \geq \frac{1}{\ell} \left( \frac{\ell}{n} \right)^2 \OPT \geq \frac{1}{2} \frac{\OPT}{n} \enspace.
\]

We have for all $i \in [m]$
\[
\Ex{\left(\sum_{\ell' < \ell} A x^{(\ell')}\right)_i} \leq \sum_{\ell' = pn + 1}^{\ell - 1} \frac{1}{\ell'} \frac{\ell'}{n} b_i \leq \frac{\ell - pn}{n} b_i \leq (1 - p) b_i \enspace.
\]
Defining $\delta = \frac{1 - \nicefrac{1}{b_i}}{1 - p} - 1$ yields $(1 + \delta) (1 - p) b_i = b_i - 1$. Like in the proofs of Lemma~\ref{lemma:allocation_probability} and Theorem~\ref{theorem:small-capacity}, we can apply a Chernoff bound to get
\begin{align*}
\Pr{\left(\sum_{\ell' < \ell} A x^{(\ell')}\right)_i \geq b_i - 1} & = \Pr{\left(\sum_{\ell' < \ell} A x^{(\ell')}\right)_i \geq (1 + \delta) (1 - p) b_i}  \leq \left( \frac{\e^\delta}{(1 + \delta)^{1 + \delta}} \right)^{(1-p) b_i} \\
& \leq \left( \frac{\e}{1 + \delta} \right)^{b_i - 1} = \left( \frac{\e (1-p)}{1 - \frac{1}{b_i}} \right)^{b_i - 1} \leq \left( 2 \e (1-p) \right)^{b_i - 1} = \frac{1}{2d}
\end{align*}
A union bound shows that the tentative assignment can be carried out with probability at least $\frac{1}{2}$.
That is, we get
\[
\Ex{\ALG} \geq \sum_{\ell = pn + 1}^n \frac{1}{2} \Ex{c^T x^{(\ell)}} 
\geq (1-p)n \frac{1}{4} \frac{\OPT}{n} 
= \Omega\left(\frac{1}{d^{\nicefrac{1}{(b_{\min} - 1)}}}\right) \OPT \enspace.
\]
\end{proof}

The analogous modification can also be made to Algorithm~\ref{alg:truthful} with $p = 1 - \frac{1}{2 \e} \left( \frac{1}{2m} \right)^{\frac{1}{B - 1}}$. Incentive compatibility is preserved and we achieve a competitive ratio of $\Omega\left(\frac{1}{m^{\nicefrac{1}{(B - 1)}}}\right)$.

\subsection{Online Generalized Assignment Problem}
% !TeX root = onlinelp.tex

The special case of linear packing programs with $d=1$ is identical to the generalized assignment problem (GAP).
The classical definition assumes $m$ bins or resources, where bin $i$ has capacity $b_i$.
Additionally, there are $n$ items which may be placed into the bins.
Depending on the bin an item is assigned to, it has a specific size and raises a specific profit.
In particular, if item $j$ is assigned to bin $i$ it consumes $w_{i, j} \geq 0$ units of the bins capacity and raises a profit of $p_{i, j} \geq 0$.
The objective is to maximize the total assigned profit while not exceeding the capacities of the bins.
\begin{align} \label{linearprogram:gap}
 \max          \quad \sum_{i \in [m],\ j \in [n]} p_{i, j} \cdot &x_{i, j} \\
 \mbox{s.\,t.} \quad \sum_{j \in [n]}             w_{i, j} \cdot x_{i, j} &\leq b_i     &i\in [m] \nonumber \\
                     \sum_{i \in [m]}                            x_{i, j} &\leq 1       &j\in [n] \nonumber \\
                                                           x_{i, j} \in \{&0, 1\}       &i\in [m],\ j\in [n] \nonumber                                   
\end{align}

Well-known special cases of GAP are edge-weighted matching ($\forall i, \forall j: w_{i, j} = 1, b_i = 1$), the knapsack problem ($m=1$), unweighted bipartite matching ($\forall i, \forall j: p_{i, j} = w_{i, j} = 1, b_i = 1$) and the AdWords problem ($\forall i, \forall j: p_{i, j} = w_{i, j}$).

In this section we will use the above LP-formulation which is identical to the notation used in the literature. The correspondence of this linear program to the model defined in Section~\ref{sec:definition} can easily be seen by assuming that every online item $j$ comes with multiple columns, one for every bin $i$ that it can be assigned to. Here, column $i$ of request $j$ has exactly one non-zero entry $a_{i, j, i} = w_{i, j}$ and the objective function value $c_{j, i}$ is $p_{i, j}$.

Our algorithm is based on the following simple observation. If all items consume more than half of a bins capacity then we can assign at most one item per bin. Hence, such an instance is identical to edge-weighted matching. Given a general GAP-instance we define the restricted instance $\mathcal{I}^{\text{heavy}}$ where we only allow those options with $w_{i, j} > \frac{1}{2} b_i$. The complementary restricted instance with the options $w_{i, j} \leq \frac{1}{2} b_i$ will be denoted by $\mathcal{I}^{\text{light}}$. Our algorithm will make a random choice whether to exclusively consider $\mathcal{I}^{\text{heavy}}$ or $\mathcal{I}^{\text{light}}$.

\begin{algorithm}
\caption{Online generalized assignment problem\label{alg:GAP_algorithm}}
\SetKwInOut{Input}{Input} \SetKwInOut{Output}{Output}
%\Input{capacity vector $b$ and number of items $n$}
%\Output{allocation vector $x$}
%\BlankLine
Flip a biased coin with $\Pr{\text{heads}} = \lambda$ and $\Pr{\text{tails}} = 1-\lambda$\;
\eIf{\text{heads}}{
Only consider options with $w_{i, j} > \frac{1}{2} b_i$\;
Use the online algorithm by Kesselheim et al.~\cite{DBLP:conf/esa/KesselheimRTV13} for edge-weighted matching\;
} {
Only consider options with $w_{i, j} \leq \frac{1}{2} b_i$\;
Let $S$ be the index set of the first $pn$ incoming items\;
  \For(\tcp*[f]{steps $\ell =  pn + 1$ to $n$}){\text{each subsequently arriving item} $j$}{
    Set $S := S \cup \{j\}$ and $\ell := |S|$\;
    Let $\tilde{x}^{(\ell)}$ be an optimal fractional solution of the current LP-relaxation (i.\,e.\ restricted to visible items $S$ and \emph{light} options)\;
    Choose a bin $i^{(\ell)}$ (possibly none), where bin $i$ has probability $\tilde{x}^{(\ell)}_{i, j}$\;
    \If{\text{item} $j$ (\text{now with} $w_{i^{(\ell)}, j}$) \text{still fits into bin} $i^{(\ell)}$}{
      Assign item $j$ to bin $i^{(\ell)}$\;
    }
  }
}
\end{algorithm}

\begin{theorem}
Choosing the parameter $\lambda = \frac{1}{1 + \nicefrac{16}{3 \e}}$ and $p = \frac{2}{3}$, Algorithm~\ref{alg:GAP_algorithm} is $\frac{1}{8.1}$-competitive. 
\end{theorem}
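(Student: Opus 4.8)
The plan is to analyze the two branches of Algorithm~\ref{alg:GAP_algorithm} separately and combine them via the coin flip. Since GAP is the $d=1$ special case of online packing LPs, the optimal fractional solution $\OPT$ splits as $\OPT = \OPT^{\text{heavy}} + \OPT^{\text{light}}$, where $\OPT^{\text{heavy}}$ and $\OPT^{\text{light}}$ are the fractional optima of $\mathcal{I}^{\text{heavy}}$ and $\mathcal{I}^{\text{light}}$ restricted to the respective option sets (more precisely, the optimal fractional value is at most the sum of the two restricted optima, which is all we need). The heavy branch occurs with probability $\lambda$ and runs the Kesselheim et al.~\cite{DBLP:conf/esa/KesselheimRTV13} matching algorithm, which is $\nicefrac{1}{\e}$-competitive; crucially, since every heavy option consumes more than $\frac{1}{2}b_i$, at most one item fits per bin, so $\mathcal{I}^{\text{heavy}}$ really is an instance of edge-weighted matching and that guarantee applies directly. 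Thus the heavy branch yields expected value at least $\frac{\lambda}{\e}\OPT^{\text{heavy}}$.

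For the light branch, occurring with probability $1-\lambda$, I would mimic the proof of Theorem~\ref{theorem:small-capacity} / the sampling-phase lemma but exploit that every light option satisfies $w_{i,j}\le\frac12 b_i$. First I would apply Observation~\ref{obs:simple_local_value} to the light sub-LP to lower bound the expected value of the tentative allocation in round $\ell$ by $\frac{1}{\ell}\big(\frac{\ell}{n}\big)^2\OPT^{\text{light}}$, and since $p=\frac23$ gives $\ell\ge pn$, this is at least $\frac{2}{3}\frac{\OPT^{\text{light}}}{n}$ (using $\frac{\ell}{n}\ge p$). Next I would bound the failure probability of a tentative assignment in round $\ell$: as in Lemma~\ref{lemma:allocation_probability}, the expected consumption of bin $i$ from previous tentative allocations is at most $\frac{\ell-1-pn}{n}b_i\le(1-p)b_i=\frac13 b_i$, and a Chernoff/Markov-type argument shows that the probability the remaining capacity is below $w_{i^{(\ell)},j}$ is bounded away from $1$. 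The key point, which is the analogue of the $b_i-1$ slack used before, is that because $w_{i,j}\le\frac12 b_i$ the assignment succeeds as long as the already-consumed amount is at most $\frac12 b_i$; since expected consumption is only $\frac13 b_i$, Markov's inequality alone gives success probability at least $1-\frac{1/3}{1/2}=\frac13$. This yields expected light-branch value at least $(1-p)\,n\cdot\frac13\cdot\frac23\frac{\OPT^{\text{light}}}{n}=\frac{2}{27}\OPT^{\text{light}}$ (the precise constant to be tuned).

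Combining the branches, $\Ex{\ALG}\ge\frac{\lambda}{\e}\OPT^{\text{heavy}}+(1-\lambda)\gamma\,\OPT^{\text{light}}$ for the light-branch constant $\gamma$. Choosing $\lambda=\frac{1}{1+\nicefrac{16}{3\e}}$ is designed precisely to equalize the coefficients $\frac{\lambda}{\e}$ and $(1-\lambda)\gamma$, so that the bound becomes a single constant times $(\OPT^{\text{heavy}}+\OPT^{\text{light}})\ge\OPT$, giving the $\frac{1}{8.1}$ ratio. I would finish by verifying arithmetically that this common coefficient is at least $\frac{1}{8.1}$.

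The main obstacle I anticipate is the light-branch failure-probability bound, specifically making the Chernoff argument rigorous under the dependencies between tentative allocations across rounds. The clean way is to reuse the $1$-correlation machinery established in Lemma~\ref{lemma:allocation_probability}: conditioning on which items arrive in the first $\ell-1$ rounds, each tentative allocation contributes expected load $\frac{1}{\ell'}\cdot\frac{\ell'}{n}b_i=\frac{b_i}{n}$ to bin $i$ independent of later rounds, so the loads are $1$-correlated and a Chernoff bound applies. However, for the weak success probability $\frac13$ needed here, a plain Markov inequality on the unconditional expectation $\frac13 b_i$ against the threshold $\frac12 b_i$ may already suffice and avoids the heavier Chernoff apparatus entirely — indeed the paper advertises this proof as illustrating the technique \emph{without} technical Chernoff bounds. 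The delicate part is then only the bookkeeping of the sampling phase (items in the first $pn$ rounds are skipped) and verifying that the constants assemble to exactly beat $8.1$.
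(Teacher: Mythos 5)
Your overall architecture matches the paper's (the heavy/light split with $\OPT^{\text{heavy}}+\OPT^{\text{light}}\ge\OPT$, the $\nicefrac{1}{\e}$-competitive matching algorithm on $\mathcal{I}^{\text{heavy}}$, LP rounding with a sampling phase on $\mathcal{I}^{\text{light}}$, and combination via the coin flip), and your instinct that plain Markov suffices in the light branch is exactly right --- the paper uses no Chernoff or $1$-correlation machinery there. However, your light-branch analysis contains a genuine error: the light branch of Algorithm~\ref{alg:GAP_algorithm} solves the LP with \emph{unscaled} capacities, so the local optimum only satisfies $(A\tilde{x}^{(\ell')})_i \le b_i$, not $(A\tilde{x}^{(\ell')})_i \le \frac{\ell'}{n} b_i$. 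Your per-round expected load of $\frac{1}{\ell'}\cdot\frac{\ell'}{n}b_i = \frac{b_i}{n}$ implicitly assumes the scaled LP of Algorithm~\ref{alg:packing_algorithm}; for the algorithm actually being analyzed the correct per-round bound is $\frac{b_i}{\ell'}$, so the expected consumption of bin $i$ before round $\ell$ is $\sum_{\ell'=pn+1}^{\ell-1}\frac{b_i}{\ell'} \le b_i\ln\frac{1}{p} = b_i\ln\frac{3}{2}\approx 0.405\,b_i$, not $\frac{1}{3}b_i$, and Markov against the threshold $\frac{1}{2}b_i$ gives worst-round success probability only about $0.19$, not $\frac{1}{3}$. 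Conversely, on the value side you give away a factor of $\frac{\ell}{n}$ needlessly: precisely because the LP is unscaled, projecting $x^\ast$ onto $S$ shows $\Ex{p^T\tilde{x}^{(\ell)}}\ge\frac{\ell}{n}\OPT^{\text{light}}$ directly, so the per-round tentative value is at least $\frac{1}{n}\OPT^{\text{light}}$, rather than the $\frac{2}{3n}\OPT^{\text{light}}$ you obtain via the detour through Observation~\ref{obs:simple_local_value}.

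These slacks are fatal to the stated constant, so the ``constant to be tuned'' caveat cannot be closed with your bounds. Your plan yields a light-branch guarantee of $\gamma=\frac{2}{27}$, but the theorem's fixed $\lambda=\frac{1}{1+\nicefrac{16}{3\e}}$ is calibrated to $\gamma=\frac{3}{16}$: it equalizes $\frac{\lambda}{\e}=(1-\lambda)\cdot\frac{3}{16}=\frac{1}{\e+\nicefrac{16}{3}}\approx\frac{1}{8.06}\ge\frac{1}{8.1}$, whereas with $\gamma=\frac{2}{27}$ even the best choice of $\lambda$ gives only $\frac{1}{\e+\nicefrac{27}{2}}\approx\frac{1}{16.2}$. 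The fix is the paper's sharper bookkeeping: use the unscaled per-round value $\frac{1}{n}\OPT^{\text{light}}$, apply the round-dependent Markov bound that the tentative allocation in round $\ell$ succeeds with probability at least $1-2\sum_{\ell'=pn+1}^{\ell-1}\frac{1}{\ell'}$ (rather than a single worst-round bound), and sum over rounds $\ell=pn+1,\ldots,n$ to obtain $\Ex{\ALG \mid \text{tails}} \ge \left(3(1-p)-2\ln\frac{1}{p}\right)\OPT^{\text{light}} = \left(1-\ln\frac{9}{4}\right)\OPT^{\text{light}}\ge\frac{3}{16}\,\OPT^{\text{light}}$ at $p=\frac{2}{3}$; only then does the given $\lambda$ deliver the claimed $\frac{1}{8.1}$.
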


\begin{proof}
First we analyze the case when the coin shows heads.
In this situation the instance is restricted to $\mathcal{I}^{\text{heavy}}$ which has an optimal value of $\OPT^{\text{heavy}}$.
As noted above, this is an instance of edge-weighted matching.
Using the online algorithm by Kesselheim et al.~\cite{DBLP:conf/esa/KesselheimRTV13} we obtain an assignment that is $\frac{1}{\e}$-competitive in expectation with respect to $\OPT^{\text{heavy}}$, hence
\begin{equation} \label{GAP_inequality_heads}
 \Ex{ALG \mid \text{heads}} \geq \frac{1}{\e} \cdot \OPT^{\text{heavy}} \enspace .
\end{equation}

If the coin shows tails the algorithm considers $\mathcal{I}^{\text{light}}$ and works similarly to Algorithm~\ref{alg:packing_algorithm}, except for an additional sampling phase and unscaled capacities.
% It starts by sampling a constant fraction of the online items.
% For the following items it always solves the LP-relaxation of the known part of the instance.
% The fractional assignment of the online item is interpreted as a probability distribution to select a bin by randomized rounding.
% In the following we will first analyze the expected profit of a tentative allocation in any round and then bound the success probability of a tentative allocation.
Fix a step $\ell \geq pn +1$ and let $x^*$ be on optimal fractional solution of $\mathcal{I}^{\text{light}}$ with objective value $\OPT^{\text{light}}$.
The set of visible items is a random subset $S \subseteq \mathcal{I}^{\text{light}}$ with cardinality $\ell$.
Consider the projected vector $x^\prime$ with $x^\prime_{i, \tilde{j}} = x^*_{i, \tilde{j}} $ if $\tilde{j} \in S$, and $x^\prime_{i, \tilde{j}} = 0$ otherwise.
Obviously, $\Ex{p^T \tilde{x}^{(\ell)}} \geq \Ex{p^T x^\prime} = \frac{\ell}{n} \cdot \OPT^{\text{light}}$.
Note that the online item $j$ can be seen as being uniformly chosen from the $\ell$ items in $S$.
The bin $i^{(\ell)}$ is determined by interpreting the fractional allocation of $j$ in $\tilde{x}^{(\ell)}$ as a probability distribution.
Hence, the expected profit of a tentative allocation in round $\ell$ is $\Ex{p_{i^{(\ell)}, j}} = \frac{1}{\ell} \cdot \Ex{p^T \tilde{x}^{(\ell)}} \geq \frac{1}{n} \cdot \OPT^{\text{light}}$.

Since $w_{i^{(\ell)}, j} \leq \frac{1}{2} b_{i^{(\ell)}}$, the allocation of item $j$ to bin $i^{(\ell)}$ will be successful if the previous resource consumption of bin $i^{(\ell)}$ is at most $\frac{1}{2}b_{i^{(\ell)}}$. In every previous round we solved the LP-relaxation of $\mathcal{I}^{\text{light}}$ restricted to the then visible items in $S$ and randomly rounded the fractional allocation. Again, since the then online item can be seen as being uniformly chosen from the visible items, the expected resource consumption of the tentative allocation in round $\ell' < \ell$ is at most $\frac{b_i}{\ell'} \ (\forall i)$. Hence, the expected consumption of any bin $i$ before round $\ell$ is at most $\sum_{\ell'=pn+1}^{\ell-1} \frac{b_i}{\ell'}$. Using Markov's inequality we get
$$\Pr{\text{allocation successful}} \geq 1 - \frac{\sum_{\ell'=pn+1}^{\ell-1} \nicefrac{b_{i^{(\ell)}}}{\ell'}}{\nicefrac{b_{i^{(\ell)}}}{2}} = 1 - 2 \cdot \sum_{\ell'=pn+1}^{\ell-1} \frac{1}{\ell'}\enspace.$$

Combining the expected profit of each tentative allocation with its success probability and summing over all rounds we can bound the expected profit of the algorithm:
\begin{align*}
 \Ex{ALG \mid \text{tails}} &\geq \sum_{\ell = pn +1}^{n} \frac{\OPT^{\text{light}}}{n} \cdot \left(1 - 2 \cdot \sum_{\ell'=pn+1}^{\ell-1} \frac{1}{\ell'} \right) \\
 &= \frac{\OPT^{\text{light}}}{n} \cdot \left((1-p)n - 2 \cdot \sum_{\ell' = pn +1}^{n}   \left( \frac{n}{\ell'} - 1\right) \right) \\
 &\geq \OPT^{\text{light}} \cdot \left(3(1-p) - 2 \ln{\left(\frac{1}{p}\right)} \right) \enspace .
\end{align*}
In the last inequality we used the fact $\sum_{\ell'=pn+1}^{n}\frac{1}{\ell'} < \int_{pn}^{n}\frac{1}{t}\text{d}t = \ln{(\nicefrac{1}{p})}$.
By the choice of the parameter $p=\nicefrac{2}{3}$ we have
\begin{equation} \label{GAP_inequality_tails}
 \Ex{ALG \mid \text{tails}} \geq \left(1 - \ln{\left(\frac{9}{4}\right)}\right) \cdot \OPT^{\text{light}} \geq \frac{3}{16} \cdot \OPT^{\text{light}}\enspace .
\end{equation}

Finally we can combine the two inequalities~(\ref{GAP_inequality_heads}) and~(\ref{GAP_inequality_tails}).
Together with $\lambda = \frac{1}{1 + \nicefrac{16}{3\e}}$ and since $\OPT^{\text{heavy}} + \OPT^{\text{light}} \geq \OPT$ we get 
\begin{align*}
 \Ex{ALG} &\geq \lambda \cdot \Ex{ALG \mid \text{heads}} + (1-\lambda) \cdot \Ex{ALG \mid \text{tails}} \\
 &\geq \frac{\lambda}{\e} \cdot \OPT^{\text{heavy}}  + \frac{(1-\lambda) 3}{16} \cdot \OPT^{\text{light}}
 = \frac{1}{\e + \nicefrac{16}{3}} (\OPT^{\text{heavy}} + \OPT^{\text{light}})
 \geq \frac{1}{8.1} \cdot \OPT \enspace .
\end{align*}
\end{proof}

\bibliographystyle{plain}
\bibliography{onlinelp}

\end{document}